\numberwithin{equation}{section}
	\theoremstyle{plain}
	\newtheorem{theorem}{Theorem}
	\numberwithin{theorem}{section}
	\newtheorem{lemma}[theorem]{Lemma}       	% [theorem] ==> theorems and lemmas will share a counter
	\newtheorem{proposition}[theorem]{Proposition}
	\newtheorem{corollary}[theorem]{Corollary}
	\theoremstyle{definition}
	\newtheorem{remark}[theorem]{Remark}
\newcommand{\diag}{\operatorname{diag}}
\newcommand{\tr}{\operatorname{tr}}
\newcommand \vp {v}
\newcommand \VF {V_F}
\newcommand \AF {A}
\newcommand \BF {B}
\newcommand \CE{\operatorname{CE}}
\newcommand\Eb{\mathds{E}}
\newcommand\Pb{\mathds{P}}
\newcommand\Rb{\mathds{R}}
\newcommand\Ac{\mathscr{A}}
\newcommand\Fc{\mathscr{F}}
\newcommand\Lc{\mathscr{L}}
\newcommand\Jc{\mathscr{J}}
\newcommand\sig{\sigma}
\newcommand\Sig{\Sigma}
\newcommand\gam{\gamma}
\newcommand\kap{\kappa}
\newcommand\tht{\theta}
\newcommand\0{\textbf{0}}
\newcommand\Fv{\textbf{F}} %<--- with concrete fonts there is no bolded math font so you changed to bold text font
\newcommand\Sv{\textbf{S}}
\newcommand\Zv{\textbf{Z}} %<--- with concrete fonts there is no bolded math font so you changed to bold text font
\newcommand\av{\mathbf{a}}
\newcommand\ev{\mathbf{e}}
\newcommand\gv{\mathbf{g}}
\newcommand\rv{\mathbf{r}}
\newcommand\zv{\mathbf{z}}
\newcommand\mv{\mathbf{m}}
\newcommand\Wv{\textbf{W}}
\newcommand\etav{{\boldsymbol\eta}}
\newcommand\muv{{\boldsymbol\mu}}
\newcommand\piv{{\boldsymbol\pi}}
\newcommand\thtv{{\boldsymbol\tht}}
\newcommand\Thtv{{\boldsymbol\Theta}}
\newcommand\Ct{\widetilde{C}}
\newcommand\Ht{\widetilde{H}}
\newcommand\Zvt{\widetilde{\textbf{Z}}}
\newcommand\muvt{\widetilde{\muv}}
\newcommand\Sigt{\widetilde{\Sig}}
\newcommand\sigt{\widetilde{\sig}}
\newcommand\dd{\mathrm{d}}
\newcommand\ee{\mathrm{e}}
\newcommand\Cov{\operatorname{Cov}}
\providecommand{\keywords}[1]{\textbf{\textit{Keywords: }} #1}
\begin{document}

\title{Optimal Trading of a Basket of Futures Contracts}

\author{
Bahman Angoshtari
\thanks{Department of Applied Mathematics, University of Washington, Seattle WA 98195.  \textbf{e-mail}: \url{bahmang@uw.edu}}
\and
Tim Leung
\thanks{Department of Applied Mathematics, University of Washington, Seattle WA 98195.  \textbf{e-mail}: \url{timleung@uw.edu}}
}

\date{This version: \today}

\maketitle

\begin{abstract}
We study the problem of dynamically trading   multiple futures contracts with different underlying assets. To capture the joint dynamics of stochastic bases for all traded futures, we propose a new model involving a multi-dimensional scaled Brownian bridge that is stopped before price convergence. This leads to the analysis of the corresponding Hamilton-Jacobi-Bellman (HJB) equations, whose solutions are derived in semi-explicit form. The resulting optimal trading strategy is a long-short policy that accounts for whether the futures are in contango or backwardation. Our model also allows us to quantify and compare the values of trading in the futures markets when the underlying assets are traded or not. Numerical examples are provided to illustrate the optimal strategies and  the effects of   model parameters.
\end{abstract}

\keywords{  futures, stochastic basis,  Brownian bridge, utility maximization} 

% \emph{\textbf{JEL Classification:}}  C61 G11 G13

%-----------------------------------------------------------------------------------
%
%       SECTION: 		Introduction
%
%-----------------------------------------------------------------------------------

\section{Introduction}\label{sec:Intro}

%Futures are bilateral contracts of agreement to buy or sell an asset at a pre-determined price at a pre-specified time in the future. The underlying (spot) asset can be a physical commodity, market index, or financial instrument. 
%
%Futures are standardized and exchange-traded. 

Following the \emph{financialization} of commodity markets in the early 2000s, commodity futures have gained popularity among fund managers and institutional investors. It is estimated that the net institutional investors' holding in various commodity futures indices has risen from \$15 billion in 2003 to more than \$200 billion in 2008.\footnote{See CFTC Press Release 5542-08: \url{https://www.cftc.gov/PressRoom/PressReleases/pr5542-08}.} The Chicago Mercantile Exchange (CME), which is the world's largest futures exchange, averages over 15 million futures contracts traded per day.\footnote{According to CME group report: \url{https://www.cmegroup.com/daily_bulletin/monthly_volume/Web_ADV_Report_CMEG.pdf}} Within the universe of hedge funds and alternative investments, futures funds play an integral role with hundreds of billions under management. This motivates us to study the problem of trading futures.

%That this deregulation has benefited the commodity markets as a whole, however, is still subject to debate. Proponents insist on benefits such as added liquidity and ease in hedging pressure. Critics argue that the deregulation has led to increase in commodity price, volatility, and comovements as well as market anomalies such as non-convergence of futures and spot prices at the delivery date.  See \cite{vanHuellen2018} and the references therein for further discussion.

There is substantial theoretical and empirical evidence pointing to long-short trading strategies for commodity futures. On the empirical side, various studies have highlighted that a multitude of long-short trading strategies provide superior performance relative to long-only strategies. See \cite{Miffre2016} for a survey. On the theoretical side, there are two long standing asset pricing theories, with substantial empirical support, suggesting long-short trading strategies for speculative commodity futures. 
% that suggest the following trading strategy for a speculator trading a commodity futures. Short the futures when the asset is ``\emph{contangoed}'' (meaning that the futures curve is upward sloping), and long the futures when the asset is ``\emph{backwardated}'' (meaning that the futures curve is downward sloping).

According to \emph{the theory of storage},\footnote{See \cite{Kaldor1939}, \cite{Working1949}, and \cite{Brennan1958}.} when a commodity has an abundant supply, its market  is in \emph{contango},  meaning that the futures prices are expected to fall over time (as time-to-delivery decreases). In this situation, a speculators may seek to take a short position in the futures. On the other hand, if the market is in \emph{backwardation},  then the futures prices tend to \mbox{decrease} over time. In this case, the benefit of holding inventory (referred to as the \emph{convenience yield}) is higher than financing and storage costs. Since inventory holders benefit from holding the commodity, the futures price is expected to rise with time, leading speculators to long the futures. The second pricing theory, the so-called \emph{hedging pressure hypothesis,}\footnote{See \cite{Cootner1960}.}connect the phenomena of contango and backwardation to the  risk-premium paid by hedgers to speculators.  Intuitively, hedgers who are long in the commodity or its futures  need to attract speculators to take short positions as their counterparties. To this end, hedgers are willing to accept higher future prices (which benefits a speculator that enters a short position), resulting in a contango. Similarly, a net short position for the hedgers results in a demand for long speculators and results in a backwardated market. This argument highlights that contangoed (resp. backwardated) commodity markets pay risk premia for short (resp. long) positions.

These theories suggest that a speculator should take a short position in a futures when the commodity is contangoed and a long position when it is backwardated. The \emph{basis} of a futures contract, defined herein as the ratio of the futures price to the forward price, is a signal for determining if, and to what degree, a commodity is contangoed or backwardated. As such, a speculative trading strategy involving a futures contract should be driven by its basis. To formulate an optimal trading problem involving futures contracts, it is thus imperative to incorporate the dynamics of the basis.

Early studies of optimal futures trading that incorporates the dynamics of basis include \cite{BrennanSchwartz1988} and \cite{BrennanSchwartz1990}. They assumed that the basis of an index futures follows a scaled Brownian bridge and calculated the value of the embedded timing options to trade the basis. They then used the option prices to devise open-hold-close strategies involving the index futures and the underlying index. Also under a Brownian bridge model,   \cite{dai2011optimal} provided an alternative trading strategy and specification of transaction costs. Another related work by \cite{LiuLongstaff2004} assumed that the basis follows a scaled Brownian bridge and the investor is subject to a collateral constraint. They derived the  closed-form   strategy that maximizes  the expected logarithmic utility of terminal wealth.

In the aforementioned studies, however, the market model contains arbitrage. Indeed, it is assumed that the basis, which is a tradable asset, converges to zero at a fixed future time. Instead of specifying the dynamics of the basis, one could start with a model for the commodity price and then derive the no arbitrage futures price using risk-neutral pricing theory. One popular class of such models are the so-called \emph{convenience yield models}, introduced by \cite{GibsonSchwartz1990} who proposed a factor model in which the drift of the spot price was driven by a \emph{convenience yield} process assumed to be an Ornstein-Uhlenbeck process. This model was subsequently extended to incorporate stochastic interest rate and jumps, see \cite{Schwartz1997}, \cite{HilliardReis1998}, and
\cite{CarmonaLudkovski2004}. 

In this paper, we propose a model to capture the joint dynamics of stochastic bases using a multi-dimensional Brownian bridges that are stopped before convergence. We then formulate and solve two stochastic optimal trading problems. In one setting, the underlying assets are not traded, and the trading strategies involve only futures. In the other, both the futures and   underlying assets are traded. In both cases, we derive the optimal trading strategies by solving the corresponding Hamilton-Jacobi-Bellman (HJB) equations. The trading strategies are expressed  semi-explicitly in terms of the solution of a matrix Riccati differential equation, and are numerically illustrated under a number of different trading scenarios.

%A more general approach is to apply the Heath-Jarrow-Morton framework, originally introduced for pricing of fixed-income securities, to model the forward curve of futures contracts. See \cite{MiltersenSchwartz1998} and \cite{BjorkLanden2002}, among others. 
%\cite{LeungLiLiZheng2015} study the problem of trading futures with transaction costs when the underlying spot price is mean-reverting.

%\red{Tim: Please add papers that you see fit, these are continuous time optimal trading problems involving futures}.

Among the authors' recent related studies, \cite{LeungYan2018} and \cite{LeungYan2019} applied utility maximization approach to derive dynamic pairs trading strategies for futures under two-factor spot models.  Most recently  and relevantly, \cite{AngoshtariLeung2019} analyzed  the problem of dynamically trading a futures contract and its underlying asset. The associated basis is modeled by a Brownian bridge, but the process is stopped early to capture the non-convergence of prices at the end of trading horizon. In the current paper, we extend this approach in two directions. Firstly, we model the joint dynamics of stochastic bases using a multi-dimensional Brownian bridges. Secondly, multiple futures and spot assets are traded.  In addition, we also include the case where the only the futures  are traded but the spot assets are not.

The rest of the paper is organized as follows. We introduce our market model in Section \ref{sec:model} and analyze its properties in Section \ref{sec:Properties}. In Section \ref{sec:Futures}, we formulate and solve the problem of optimally trading a portfolio of futures contracts without  the underlying assets. In Section \ref{sec:SpotFutures}, we derive the optimal strategy when  futures and underlying assets are all traded. We provide a series of illustrative numerical examples in Section \ref{sec:example}. Concluding remarks are included in Section \ref{sec:conclude}, and  lengthy proofs are   in the Appendix.

%-----------------------------------------------------------------------------------
%
%       SECTION: 		Market Model
%
%-----------------------------------------------------------------------------------
\section{Market setting}\label{sec:model}
Consider a market with a riskless asset that pays interest at a constant rate $r\ge0$, $N$ non-dividend paying assets $S_{1},\ldots, S_{N}$, and $N$ futures contracts $F_1,\ldots,F_N$ on the assets (that is, one futures contract per each asset). We assume that the expiry date of the $i$-th futures contract is  $T_i$ and consider trading in this market over the horizon $[0,T]$ where $0<T<T_i$ for all $i\in\{1,\dots,N\}$. In particular, trading stops before the expiry of the futures contracts.

Since the interest rate is assumed to be deterministic, the futures and forward prices must coincide and we must have $F_{t,i} = S_{t,i}\ee^{r(T_i-t)}$. In practice, however, futures and forward prices are different because of various market uncertainties and inefficiencies. Indeed, unexpected fluctuations in the supply of the underlying asset or changes in the net position of the hedgers in the market can push the futures price up or down. See the discussion of the \emph{theory of storage} and the \emph{hedging pressure hypothesis} in Section \ref{sec:Intro}.  In order to capture real markets more closely, we assume that the \emph{bases}
\begin{align}
	\frac{F_{t,i}}{S_{t,i}}\ee^{-r(T_i-t)};\quad 0\le t\le T, i\in\{1,\ldots,N\},
\end{align}
are stochastic processes and we propose a mathematical model for them.

The log-value of the random basis for the $i$-the futures contract is defined by 
\begin{align}\label{eq:Z}
	Z_{t,i} := \log\left(\frac{F_{t,i}}{S_{t,i}}\right) - r (T_i - t);\quad i\in\{1,\ldots,N\}.
\end{align}
We assume that the futures price processes $(F_{t,i})_{0\le t\le T}$ satisfy
\begin{align}\label{eq:Fi}
	\dd F_{t,i} &= F_{t,i}\left[\left(\mu_{i,F} + \frac{\eta_{i,F}}{T_i-t} Z_{t,i}\right)\dd t + \sum_{j=1}^i \sigt_{i,j} \dd W_{t,j}\right],
\intertext{and the asset price $(S_{t,i})_{0\le t\le T}$ satisfy}
	\label{eq:Si}
	\dd S_{t,i} &= S_{t,i}\left[\left(\mu_{i,S} + \frac{\eta_{i,S}}{T_i-t} Z_{t,i}\right)\dd t + \sum_{j=1}^{N+i}\sigt_{N+i,j} \dd W_{t,j}\right].
	% \dd S_{t,i} &= S_{t,i}\left(\mu_i \dd t + \sum_{j=1}^{N+i}\sigt_{N+i,j} \dd W_{t,j}\right).
\end{align}
Here, $\big(\Wv_t^\top=(W_{t,1},\dots,W_{t,2N})\big)_{0\le t\le T}$ is a standard $2N$ dimensional Brownian motion in a filtered probability space $\big(\Omega,$ $\Fc,$ $\Pb,$ $(\Fc_t)_{t\ge0}\big)$, where $(\Fc_t)_{t\ge0}$ is generated by the Brownian motion and satisfies the usual conditions. It is assumed that $\mu_{i,S}$, $\mu_{i,F}$, $\eta_{i,F}$ and $\eta_{i,S}$ are constants such that $\eta_{i,F}<\eta_{i,S}$. Recall, also, our standing assumption that $T<T_i$.

Next, we discuss the correlation between the assets as well as clarifying the role of the parameters $\sigt_{i,j}$. Define the (instantaneous) \emph{covariance matrix}
\begin{align}\label{eq:Cov}
	\Sig =
	\begin{pmatrix}
		\Sig_\Fv & \Sig_{\Fv\Sv}\\
		\Sig_{\Fv\Sv}^\top & \Sig_{\Sv}
	\end{pmatrix}
	:= \Sigt\Sigt^\top,
\end{align}
in which $\Sigt$ is a $2N\times 2N$ upper triangular block matrix
\begin{align}\label{eq:Sigt}
	\Sigt =
	\begin{bmatrix}
		\Sigt_\Fv & \0\\
		\Sigt_{\Sv\Fv} & \Sigt_\Sv
	\end{bmatrix}
	:=\begin{bmatrix}
		\sigt_{1,1} &\dots & 0 & 0 &\dots& 0\\
		\vdots &\ddots &\vdots & \vdots &\ddots& \vdots\\
		\sigt_{N,1} &\dots &\sigt_{N,N} & 0 &\dots& 0\\
		\sigt_{N+1,1} &\dots &\sigt_{N+1,N} & \sigt_{N+1,N+1} &\dots& 0\\
		\vdots &\ddots &\vdots & \vdots &\ddots& \vdots\\
		\sigt_{2N,1} &\dots &\sigt_{2N,N} & \sigt_{2N,N+1} &\dots& \sigt_{2N,2N}
	\end{bmatrix}.
\end{align}
We assume that $\Sig$ is positive definite. Let $\sig_{i,j}$ denote the $ij$-th element of $\Sig$. For $i,j\in\{1,\dots,N\}$, we have:
\begin{enumerate}
	\item[(i)] $\sig_{i,j}$ is the instantaneous covariance between the $i$-th and $j$-th futures. In particular, $\sig_{i,F}:=\sqrt{\sig_{i,i}}$, $i\in\{1,\dots,N\}$, is the volatility of the $i$-th future.
	
	\item[(ii)] $\sig_{N+i,N+j}$ is the instantaneous covariance between the $i$-th and $j$-th underlying assets. In particular, $\sig_{i,S}:=\sqrt{\sig_{N+i,N+i}}$, $i\in\{1,\dots,N\}$, is the volatility of the $i$-th underlying asset.
	
	\item[(iii)] $\sig_{N+i,j}=\sig_{j,N+i}$ is the instantaneous covariance between the $i$-th futures and the $j$-th underlying assets.
\end{enumerate}

We end this section by providing a matrix notation which will be used for simplifying later arguments. We define
\begin{align}
	\Sv_t &:= (S_{t,1},\ldots,S_{t,N})^\top,\\*
	\Fv_t &:= (F_{t,1},\ldots,F_{t,N})^\top,\\*
	\Zv_t &:= (Z_{t,1},\ldots,Z_{t,N}),^\top \\*
	\Wv_{t,1} &:= (W_{t,1}, \ldots, W_{t,N})^\top,\\*
	\Wv_{t,2} &:= (W_{t,N+1},\ldots,W_{t,2N})^\top,\\*
	\muv_\Sv &:= (\mu_{1,S},\dots,\mu_{N,S})^\top,\\*
	\muv_\Fv &:=(\mu_{1,F},\dots,\mu_{N,F})^\top,
\intertext{and}
	\etav_k(t) &:= \diag\left(\frac{\eta_{1,k}}{T_1-t}, \dots,\frac{\eta_{N,k}}{T_N-t}\right),	
\end{align}
for $k\in\{S,F\}$, in which $\diag(\av)$ denotes a diagonal matrix with diagonal $\av$. Under these notations, \eqref{eq:Fi} and \eqref{eq:Si} become
\begin{align}
\begin{cases}
	\dd \Fv_t = \diag(\Fv_t)\left[\big(\muv_\Fv + \etav_\Fv(t) \Zv_t\big)\dd t + \Sigt_\Fv \dd \Wv_{t,1}\right],\vspace{1em}\\
	\dd \Sv_t = \diag(\Sv_t)\left[\big(\muv_\Sv + \etav_\Sv(t) \Zv_t\big)\dd t + \Sigt_{\Sv\Fv} \dd \Wv_{t,1} + \Sigt_\Sv \dd \Wv_{t,2}\right].
\end{cases}
\end{align}
% in which we have defined
% \begin{align}
% 	\muv =
% 	\begin{pmatrix}
% 		\mu_1\\
% 		\vdots\\
% 		\mu_N
% 	\end{pmatrix},\quad
% 	\alv =
% 	\begin{pmatrix}
% 		\al_1\\
% 		\vdots\\
% 		\al_N
% 	\end{pmatrix},\quad
% \text{and}\quad
% 	K(t) &= \diag\left(\frac{\kap_1}{T+\eps_1-t}, \dots,\frac{\kap_N}{T+\eps_N-t}\right).
% \end{align}
%
% \begin{remark}
% 	By \eqref{eq:Cov}, the matrices $\Sigt_\Sv, \Sigt_\Wv,$ and $\Sigt_\Bv$ are obtained by Cholesky decomposition of the covariance matrix of the stocks and futures.\qed
% \end{remark}

\section{Properties of the stochastic basis process}\label{sec:Properties}
In this section, we provide the main properties of the model presented in the previous section. We also provide an exact discretization scheme which can be used for simulating and calibrating the model.

As stated in the following lemma, the stochastic basis $(Z_{t,i})_{0\le t\le T}$ of \eqref{eq:Z} is a scaled Brownian bridge that converges to zero at $T_i$. Therefore, the model implies that $S_i$ and $F_i$ converge at $T_i$, i.e. $\lim_{t\to T_i}(S_t / F_{t,i}) = 1$, $\Pb$-almost surely. However, since trading stops at $T$ and $T<T_i$, such a convergence is not realized in the market. As stated before, this non-convergence has practical relevance since speculative futures trades are always closed out before the delivery data.

Furthermore, Lemma \ref{lem:Bb} highlights that the stochastic basis $(Z_{t,i})_{0\le t\le T}$ is a mean-reverting process that can take positive or negative values. Note that large positive values of $Z_t$ signal a strongly \emph{contangoed} market, while large negative values indicate a strongly \emph{backwardated} market.\footnote{A commodity is said to be \emph{contangoed} if its forward curve (which is the plot of its futures prices against time-to-delivery) is increasing. The commodity is \emph{backwardated} if its forward curve is decreasing.} Therefore, the market model of Section \ref{sec:model} assumes that each futures price has a natural state, which can be a contangoed or backwardated curve, while allowing for temporary deviations from this state (say, caused by fluctuation in the inventory levels or hedging demand). Recall from our earlier discussion in Section \ref{sec:Intro}, that we expect  a trading strategy to take short futures positions when the asset is contangoed, and long positions when it is backwardated. As we will see in Sections \ref{sec:Futures} and \ref{sec:SpotFutures}, our optimal trading strategies are consistent which these expectations. 

In the statement of the lemma, we use the following notations:
\begin{align}
	\Sigt_\Zv &:= [I_N, -I_N]\Sigt,\\*
	\mv^\top &:= (m_1,\ldots,m_N),
\intertext{and}
	K(t)&:=\diag\left(\frac{\kap_1}{T_1-t},\dots,\frac{\kap_N}{T_N-t}\right),
\end{align}
in which $m_i := r + \mu_{i,F} - \mu_{i,S} - \frac{1}{2} (\sig_{i,F}^2 - \sig_{i,S}^2)$, and $\kap_i :=\eta_{i,S}-\eta_{i,F}>0$. Recall, also, that $\sig_{i,F}$ and $\sig_{i,S}$ are the volatility of the $i-$th futures and the $i$-th underlying asset, respectively.

\begin{lemma}\label{lem:Bb}
	For $0\le t\le T$, we have
	\begin{align}\label{eq:Z-SDE}
		\dd \Zv_t &= \big(\mv - K(t)\,\Zv_t\big)\dd t + \Sigt_{\Zv}\,\dd \Wv_t.
		% &= \big(\mv - K(t)\,\Zv_t\big)\dd t + \left(\Sigt_F-\Sigt_{SF}\right)\dd \Wv_{t,1} - \Sigt_S\,\dd \Wv_{t,2}.
	\end{align}
	% in which $\Sigt_\Zv = [I_N, -I_N]\Sigt$, $\mv^\top = (m_1,\ldots,m_N)$, and $K(t)=\diag\left(\frac{\kap_1}{T+\eps_1-t},\dots,\frac{\kap_N}{T+\eps_N-t}\right)$ with $m_i := r + \al_i - \mu_i - \frac{1}{2} (\sig_{F,i}^2 - \sig_{S,i}^2)$ and $\kap_i:=\eta_{i,2}-\eta_{i,1}>0$. Recall that $\sig_{F,i}$ and $\sig_{S,i}$ are the volatility of the $i-$th futures and the $i$-th asset, respectively.
%	% \begin{align}
% 	% 	m_i := \al_i - \mu_i + r - \frac{1}{2} (\sig_{F,i}^2 - \sig_{S,i}^2),
% 	% \end{align}
% 	and
% 	\begin{align}
% 		K(t) := E_\Fv(t)-E_\Sv(t)
% 		= \diag\left(-\frac{\kap_1}{T-t+\eps_1}, \dots,-\frac{\kap_N}{T-t+\eps_N}\right).
% 	\end{align}
% 	in which $\kap_i := \eta_{S,i}-\eta_{F,i}>0$.
	In particular, $(Z_{t,i})_{0\le t\le T}$ is a stopped Brownian bridge satisfying
	\begin{align}\label{eq:Bb}
		\dd Z_{t,i} = \left(m_i - \frac{\kap_i Z_{t,i}}{T_i-t}\right) \dd t + \sig_{Z,i} \dd B_{t,i},
	\end{align}
	for a constant $\sig_{Z,i}>0$ and a (one dimensional) standard Brownian motion $(B_{t,i})_{0\le t\le T}$. If we consider the solution of \eqref{eq:Bb} over $[0,T_i]$, then $Z_{T_i, i} = 0$, $\Pb$-almost surely.
\end{lemma}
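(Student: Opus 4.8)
The plan is to derive the vector equation \eqref{eq:Z-SDE} by a direct application of It\^o's formula to the definition \eqref{eq:Z}, read off the scalar equation \eqref{eq:Bb} from the diagonal of the covariance structure, and then establish the pinning $Z_{T_i,i}=0$ by solving the scalar linear SDE in closed form and analysing its behaviour at the singular endpoint $T_i$. Writing $Z_{t,i}=\log F_{t,i}-\log S_{t,i}-r(T_i-t)$, I would apply It\^o to $\log F_{t,i}$ and $\log S_{t,i}$ using \eqref{eq:Fi}--\eqref{eq:Si}. The logarithmic It\^o corrections are $-\tfrac12\sig_{i,F}^2$ and $-\tfrac12\sig_{i,S}^2$, where $\sig_{i,F}^2=\sum_{j=1}^i\sigt_{i,j}^2=\sig_{i,i}$ and $\sig_{i,S}^2=\sum_{j=1}^{N+i}\sigt_{N+i,j}^2=\sig_{N+i,N+i}$, while $\dd[-r(T_i-t)]=r\,\dd t$. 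Collecting the drift gives $r+\mu_{i,F}-\mu_{i,S}-\tfrac12(\sig_{i,F}^2-\sig_{i,S}^2)+\tfrac{\eta_{i,F}-\eta_{i,S}}{T_i-t}Z_{t,i}=m_i-\tfrac{\kap_i}{T_i-t}Z_{t,i}$, and collecting the martingale part gives $\sum_{j=1}^i\sigt_{i,j}\dd W_{t,j}-\sum_{j=1}^{N+i}\sigt_{N+i,j}\dd W_{t,j}$, which is exactly the $i$-th component of $[I_N,-I_N]\Sigt\,\dd\Wv_t=\Sigt_\Zv\,\dd\Wv_t$. Stacking over $i$ and using $K(t)\Zv_t=(\tfrac{\kap_i}{T_i-t}Z_{t,i})_i$ yields \eqref{eq:Z-SDE}.

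For the scalar statement I would note that the quadratic variation of the $i$-th component is $\dd\langle Z_i\rangle_t=\sig_{Z,i}^2\,\dd t$, where, by \eqref{eq:Cov}, $\sig_{Z,i}^2=(\ev_i-\ev_{N+i})^\top\Sig(\ev_i-\ev_{N+i})=\sig_{i,F}^2-2\sig_{i,N+i}+\sig_{i,S}^2$. Positive definiteness of $\Sig$ applied to the nonzero vector $\ev_i-\ev_{N+i}$ makes this strictly positive, so L\'evy's characterisation lets me set $B_{t,i}:=\sig_{Z,i}^{-1}\int_0^t(\ev_i-\ev_{N+i})^\top\Sigt\,\dd\Wv_s$, a standard one-dimensional Brownian motion, and recover \eqref{eq:Bb}.

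The pinning is the crux. I would solve \eqref{eq:Bb} with the deterministic integrating factor $(T_i-t)^{-\kap_i}$: since $\dd[(T_i-t)^{-\kap_i}]=\kap_i(T_i-t)^{-\kap_i-1}\dd t$ cancels the mean-reverting drift, $Y_t:=(T_i-t)^{-\kap_i}Z_{t,i}$ satisfies $\dd Y_t=m_i(T_i-t)^{-\kap_i}\dd t+\sig_{Z,i}(T_i-t)^{-\kap_i}\dd B_{t,i}$, so that
\[
Z_{t,i}=(T_i-t)^{\kap_i}\Big[T_i^{-\kap_i}Z_{0,i}+m_i\int_0^t(T_i-s)^{-\kap_i}\dd s+\sig_{Z,i}\int_0^t(T_i-s)^{-\kap_i}\dd B_{s,i}\Big].
\]
As $t\uparrow T_i$ the prefactor $(T_i-t)^{\kap_i}\to0$ since $\kap_i>0$, and an elementary case split according to whether $\kap_i$ is less than, equal to, or greater than $1$ shows that the first two bracketed terms, once multiplied by $(T_i-t)^{\kap_i}$, vanish.

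The main obstacle is the stochastic term $M_t:=\int_0^t(T_i-s)^{-\kap_i}\dd B_{s,i}$, whose quadratic variation $\langle M\rangle_t=\int_0^t(T_i-s)^{-2\kap_i}\dd s$ diverges once $\kap_i\ge\tfrac12$, so the endpoint is genuinely singular. A direct computation shows $(T_i-t)^{2\kap_i}\langle M\rangle_t\to0$ for every $\kap_i>0$, which already gives $L^2$, hence in-probability, convergence of $(T_i-t)^{\kap_i}M_t$ to $0$. To upgrade to the almost-sure statement when $\kap_i\ge\tfrac12$, I would time-change $M$ by Dubins--Schwarz as $M_t=\beta_{\langle M\rangle_t}$ for a standard Brownian motion $\beta$, and invoke the law of the iterated logarithm: since $(T_i-t)^{\kap_i}|M_t|\le\sqrt{(T_i-t)^{2\kap_i}\langle M\rangle_t\,\log\log\langle M\rangle_t}$ eventually, and one checks that $(T_i-t)^{2\kap_i}\langle M\rangle_t\,\log\log\langle M\rangle_t\to0$, it follows that $(T_i-t)^{\kap_i}M_t\to0$ almost surely. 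When $\kap_i<\tfrac12$ the argument is simpler, as $\langle M\rangle_{T_i}<\infty$ and $M_t$ converges a.s. by the martingale convergence theorem. Combining the three terms delivers $Z_{T_i,i}=0$, $\Pb$-almost surely.
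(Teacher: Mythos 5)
Your proposal is correct and follows essentially the same route as the paper: the SDE \eqref{eq:Z-SDE} and its scalar components \eqref{eq:Bb} via It\^o's lemma applied to \eqref{eq:Z}--\eqref{eq:Si}, and the explicit solution by the integrating factor $(T_i-t)^{-\kap_i}$, which is exactly the representation \eqref{eq:Zsol} the paper records. The only difference is one of completeness: the paper delegates the pinning $Z_{T_i,i}=0$ to Lemma 1 of \cite{AngoshtariLeung2019}, whereas you prove it self-containedly, correctly splitting into $\kap_i<\tfrac12$ (finite quadratic variation, martingale convergence) and $\kap_i\ge\tfrac12$ (Dubins--Schwarz time change plus the law of the iterated logarithm, with the bound holding up to the usual constant $\sqrt{2}(1+\eps)$); your identification of $\sig_{Z,i}^2=(\ev_i-\ev_{N+i})^\top\Sig(\ev_i-\ev_{N+i})>0$ and the use of L\'evy's characterisation are also exactly what the omitted details require.
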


\begin{proof}
	 \eqref{eq:Z-SDE} and \eqref{eq:Bb} directly follow from \eqref{eq:Z}--\eqref{eq:Si} by applying It\^o's lemma. The rest of the proof is similar to the proof of Lemma 1 in \cite{AngoshtariLeung2019} and is, thus, omitted.
\end{proof}

\begin{remark}\label{rem:arbitrage}
	In a limiting case of our model where $T_i\to T^+$, the spot and future prices converge at $T$. In this limiting case, the market model admits arbitrage, which is shown by the well-known argument for pricing forward contracts. Throughout the paper, we consider an arbitrage free model by assuming $T_i>T$. In particular, under this assumption, an argument similar to the proof of Proposition 1 of \cite{AngoshtariLeung2019} shows that the market model has a risk-neutral measure and, thus, is arbitrage free.
\end{remark}

The stochastic differential equation \eqref{eq:Z-SDE} is linear and its unique solution is given by
	\begin{align}\label{eq:Zsol}
		\diag\left(\left(\frac{T_i}{T_i-t}\right)^{\kap_i}\right)\Zv_t = \Zv_0 
		&{}+ \int_0^t \diag\left(\left(\frac{T_i}{T_i-s}\right)^{\kap_i}\right) \mv\, \dd s\\*
		&{}+ \int_0^t \diag\left(\left(\frac{T_i}{T_i-s}\right)^{\kap_i}\right) \Sigt_\Zv \dd \Wv_s;\quad 0\le t\le T.
	\end{align}
See, for example, eq. (6.6) on page 354 of \cite{KartzasShreve1991}. Here, we used the shorthand notation $\diag(a_i)=\diag\big((a_1,\dots,a_N)\big)$.

For the rest of this section, we derive a discretization scheme for the market model \eqref{eq:Z}--\eqref{eq:Si}. In Section \ref{sec:example}, we will use simulated prices based on this scheme to illustrate our results.

The following result provides an exact discretization for $(Z_t)_{0\le t\le T}$. 
   
\begin{proposition}\label{prop:Z-Sim}
	Let $(\Zv_t)_{0\le t\le T}$ be as in \eqref{eq:Zsol} and $M$ be a positive integer. We then have
	\begin{align}\label{eq:Z-AR}
		&\diag\left(\left(\frac{T_i}{T_i-\frac{n\,T}{M}}\right)^{\kap_i}\right)\Zv_{\frac{nT}{M}}\\*
		&\qquad{}=\diag \left( \left( \frac{T_i}{T_i-\frac{(n-1)\,T}{M}} \right)^{\kap_i} \right) \Zv_{\frac{(n-1)T}{M}} 
		+ \diag(\phi_{n,i})\mv + \ev_{n,\Zv},\hspace{1em}
	\end{align}
	for $n\in\{1,\dots,M\}$, in which
	\begin{align}\label{eq:phi}
		\phi_{n,i} := \begin{cases}
		\frac{(T_i)^{\kap_i}}{\kap_i-1}\left(\left(T_i-\frac{nT}{M}\right)^{1-\kap_i}
		- \left(T_i-\frac{(n-1)T}{M}\right)^{1-\kap_i}\right);
		&\quad \kap_i\in(0,1)\cup(1,+\infty),\vspace{1ex}\\
		T_i \ln\left(\frac{T_i - \frac{(n-1)T}{M}}{T_i - \frac{nT}{M}}\right);
		&\quad \kap_i = 1,
	\end{cases} 
	\end{align}
	for $i\in\{1,\dots, N\}$, and $\ev_{1,\Zv},\dots,\ev_{M,\Zv}$ are independent Gaussian random variables with zero mean and covariance matrix $\Eb(\ev_{n,\Zv}\ev_{n,\Zv}^\top)=[\eta_{ij;n}]_{N\times N}$ with
\begin{align}\label{eq:eta}
	\eta_{ij;n}:=
	\beta_{ij}\int_{\frac{(n-1)T}{M}}^{\frac{n T}{M}}
	\left(\frac{T_i}{T_i-s}\right)^{\kap_i}\left(\frac{T_j}{T_j-s}\right)^{\kap_j}
	\dd s.
\end{align}
Here, $\beta_{ij}$ is the $ij$-th element of the matrix
\begin{align}
	\Sig_\Zv := \Sigt_\Zv\Sigt_\Zv^\top=\Sig_\Fv+\Sig_\Sv-\Sig_{\Fv\Sv}-\Sig_{\Fv\Sv}^\top.
\end{align}
\end{proposition}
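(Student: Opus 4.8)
The plan is to work directly from the closed-form solution \eqref{eq:Zsol}, since the proposition is essentially a bookkeeping exercise on that formula. Write $t_n := nT/M$ and introduce the diagonal matrix $G(t) := \diag\!\left(\left(\frac{T_i}{T_i-t}\right)^{\kap_i}\right)$, so that \eqref{eq:Zsol} reads $G(t)\Zv_t = \Zv_0 + \int_0^t G(s)\mv\,\dd s + \int_0^t G(s)\Sigt_\Zv\,\dd\Wv_s$. The key observation is that both integrals telescope across the grid: subtracting the identity at $t_{n-1}$ from the one at $t_n$ gives
\[
	G(t_n)\Zv_{t_n} = G(t_{n-1})\Zv_{t_{n-1}} + \int_{t_{n-1}}^{t_n} G(s)\mv\,\dd s + \int_{t_{n-1}}^{t_n} G(s)\Sigt_\Zv\,\dd\Wv_s,
\]
which already has the shape of \eqref{eq:Z-AR} once the last two terms are identified with $\diag(\phi_{n,i})\mv$ and $\ev_{n,\Zv}$. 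Because of the standing assumption $T<T_i$, the integrand $G(s)$ is bounded on $[0,T]$, so every integral is well defined and finite; this is the one hypothesis that must be invoked explicitly.

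Next I would evaluate the deterministic drift term. Since $G(s)$ is diagonal and $\mv$ is constant, the $i$-th component of $\int_{t_{n-1}}^{t_n} G(s)\mv\,\dd s$ is $m_i\int_{t_{n-1}}^{t_n}\left(\frac{T_i}{T_i-s}\right)^{\kap_i}\dd s$, and the elementary antiderivative of $(T_i-s)^{-\kap_i}$ produces exactly the two cases of \eqref{eq:phi}: the power rule for $\kap_i\neq 1$ and the logarithm for the degenerate exponent $\kap_i=1$. Hence $\int_{t_{n-1}}^{t_n} G(s)\mv\,\dd s = \diag(\phi_{n,i})\mv$. This is the only genuinely computational step, and the $\kap_i=1$ case must be handled separately.

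For the stochastic term $\ev_{n,\Zv} := \int_{t_{n-1}}^{t_n} G(s)\Sigt_\Zv\,\dd\Wv_s$, I would argue that, being an It\^o integral of a deterministic matrix-valued integrand against Brownian motion, the stacked vector $(\ev_{1,\Zv},\dots,\ev_{M,\Zv})$ is jointly Gaussian with zero mean by the martingale property. For the covariance, It\^o's isometry yields
\[
	\Eb\!\left(\ev_{n,\Zv}\ev_{n,\Zv}^\top\right) = \int_{t_{n-1}}^{t_n} G(s)\Sigt_\Zv\Sigt_\Zv^\top G(s)^\top\,\dd s = \int_{t_{n-1}}^{t_n} G(s)\,\Sig_\Zv\, G(s)\,\dd s,
\]
using $\Sig_\Zv = \Sigt_\Zv\Sigt_\Zv^\top$ and $G(s)^\top = G(s)$. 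Reading off the $ij$-entry and pulling the diagonal factors out of the matrix product gives $\beta_{ij}\int_{t_{n-1}}^{t_n}\left(\frac{T_i}{T_i-s}\right)^{\kap_i}\!\left(\frac{T_j}{T_j-s}\right)^{\kap_j}\dd s = \eta_{ij;n}$, as required.

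Finally, independence follows because for $n\neq m$ the integrands of $\ev_{n,\Zv}$ and $\ev_{m,\Zv}$ have disjoint supports, so the isometry forces $\Eb(\ev_{n,\Zv}\ev_{m,\Zv}^\top)=0$; since the whole family is jointly Gaussian, this vanishing cross-covariance upgrades to genuine independence. I do not expect a real obstacle here: the statement is a direct consequence of \eqref{eq:Zsol}, and the only points needing care are confirming the boundedness of $G$ on $[0,T]$ (guaranteed by $T<T_i$) and treating the $\kap_i=1$ case of the drift integral as a limiting logarithm.
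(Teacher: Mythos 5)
Your proposal is correct and follows essentially the same route as the paper: telescope the closed-form solution \eqref{eq:Zsol} over the grid, evaluate the deterministic integral to obtain $\phi_{n,i}$ (with the $\kap_i=1$ case handled separately), and identify the stochastic increments as independent zero-mean Gaussians whose covariance, via It\^o's isometry, is exactly $[\eta_{ij;n}]$. The only difference is that you spell out the joint-Gaussianity and disjoint-support argument for independence, which the paper leaves implicit.
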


\begin{proof}
	% Consider sampling $(\Zv_t)_{0\le t\le T}$ at $M+1$ instances $0, T/M, 2T/M, \dots, T$.
	Define the discrete sample $\{\Zvt_n\}_{n=0}^M$ by
	\begin{align}
		\Zvt_n = \diag\left(\left(\frac{T_i}{T_i-\frac{n\,T}{M}}\right)^{\kap_i}\right)\Zv_{\frac{n\,T}{M}};\quad n\in\{0,\dots,M\}.
	\end{align}
	From \eqref{eq:Zsol}, we have
	\begin{align}\label{eq:Zvt}
		\Zvt_n = \Zvt_{n-1} &{}+ \int_{\frac{(n-1)T}{M}}^{\frac{n T}{M}} \diag\left(\left(\frac{T_i}{T_i-s}\right)^{\kap_i}\right) \mv\, \dd s\\*
		&{}+ \int_{\frac{(n-1)T}{M}}^{\frac{n T}{M}} \diag\left(\left(\frac{T_i}{T_i-s}\right)^{\kap_i}\right) \Sigt_\Zv \dd \Wv_s.
	\end{align}
	Direct calculation shows that,
	\begin{align}
		&\int_{\frac{(n-1)T}{M}}^{\frac{n T}{M}} \diag\left(\left(\frac{T_i}{T_i-s}\right)^{\kap_i}\right) \mv\, \dd s = (\phi_{n,1}\dots,\phi_{n,N})^\top,
	\end{align}
	with $\phi_{n,i}$ given by \eqref{eq:phi}.
	Furthermore, let
	\begin{align}\label{eq:eZ}
		\ev_{n,\Zv} := \int_{\frac{(n-1)T}{M}}^{\frac{n T}{M}} \diag\left(\left(\frac{T_i}{T_i-s}\right)^{\kap_i}\right) \Sigt_\Zv \dd \Wv_s;\quad n\in\{1,\dots,M\}.
	\end{align}
	Then, $\ev_{1,\Zv}, \dots, \ev_{N,\Zv}$ are independent Gaussian random variables with mean $\Eb(\ev_{n,\Zv})=\0_{N\times1}$ and covariance matrix $\Eb(\ev_{n,\Zv}\ev_{n,\Zv}^\top)=[\eta_{ij;n}]_{N\times N}$  with $\eta_{ij;n}$ given by \eqref{eq:eta}. The result follows from replacing the integrals in \eqref{eq:Zvt} using the last two equations.
\end{proof}

\begin{remark}
	Note that the discretization scheme \eqref{eq:Z-AR} has an autoregressive form. The difference between the popular autoregressive time-series models and \eqref{eq:Z-AR}, however, is that the latter has both a time varying variance of error and a time varying autoregressive coefficient. In particular, $(Z_t)_{0\le t\le T}$ can never be stationary since it is a Brownian bridge.	
\end{remark}

\begin{remark}\label{rem:Z-Gaussian}
	An argument similar to the proof of Proposition \ref{prop:Z-Sim} yields that, for $t\in[0,T]$,
	\begin{align}
		\Zv_t = \diag\left(\left(\frac{T_i-t}{T_i}\right)^{\kap_i}\right) \Zv_0
		&{}+ \diag\big(\phi_i(t)\big)\mv\\*
		&{}+ \int_0^t \diag\left(\left(\frac{T_i-t}{T_i-s}\right)^{\kap_i}\right)\,\Sigt_{\Zv}\dd \Wv_s,
	\end{align}
	in which, with a slight abuse of notation, we have defined
	\begin{align}
		\phi_i(t):=
		\begin{cases}\displaystyle
			\frac{1}{\kap_i-1}\left(T_i-t -\frac{(T_i-t)^{\kap_i}}{(T_i)^{\kap_i-1}}\right);&\quad \kap_i\ne 1,\vspace{1ex}\\\displaystyle
			(T_i-t)\ln\left(\frac{T_i}{T_i-t}\right);&\quad \kap_i=1.
		\end{cases}
	\end{align}
	In particular, conditional on $\Zv_0=\zv$, the process $(\Zv_t)_{0\le t\le T}$ is a Gaussian process with the mean function
	\begin{align}
		\Eb(\Zv_t|\Zv_0=\zv)=\diag\left(\left(\frac{T_i-t}{T_i}\right)^{\kap_i}\right) \Zv_0
		+ \diag\big(\phi_i(t)\big)\mv,
	\end{align}
	for $0\le t\le T$, and the covariance function
	\begin{align}
		\Cov\left(Z_{s,i}, Z_{t,j}\middle|\Zv_0=\zv\right) = \beta_{ij}\int_{0}^{s\wedge t}\left(\frac{T_i-s}{T_i-u}\right)^{\kap_i}\left(\frac{T_j-t}{T_j-u}\right)^{\kap_j}\dd u,
	\end{align}
	% Recall that $\beta_{ij}$ is the $ij$-th element of the matrix $\Sig_\Zv :=\Sig_\Fv+\Sig_\Sv-\Sig_{\Fv\Sv}-\Sig_{\Fv\Sv}^\top$.
	for $t,s\in[0,T]$.
\end{remark}

Next, we derive a discretization scheme for the asset prices $\Sv$ and $\Fv$. To simplify the scheme, we have assumed that $\etav_S(t)\equiv\0$. Then, an exact discretization scheme for $(\Sv_t,\Fv_t)_{0\le t\le T}$ is given by the following corollary of Proposition \eqref{prop:Z-Sim}.

\begin{corollary}\label{cor:Sim}
	Assume that $\eta_{i,S}=0$, for $i\in\{1,\dots, N\}$. Let $(\Fv_t,\Sv_t,\Zv_t)_{0\le t\le T}$ satisfy \eqref{eq:Z}--\eqref{eq:Si}. Then, for $n\in\{1,\dots,M\}$,
\begin{align}\label{eq:Sim-FSZ}
\begin{cases}
	\ln\left(\Sv_{\frac{nT}{M}}\right) = \ln\left(\Sv_{\frac{(n-1)T}{M}}\right) + \frac{T}{M} \muvt_\Sv + \ev_{n,\Sv},\vspace{1em}\\*
	\diag\left(\left(\frac{T_i}{T_i-\frac{n\,T}{M}}\right)^{\kap_i}\right)\Zv_{\frac{nT}{M}} =\\* \hspace{3em}\diag\left(\left(\frac{T_i}{T_i-\frac{(n-1)\,T}{M}}\right)^{\kap_i}\right)\Zv_{\frac{(n-1)T}{M}} + \diag(\phi_{n,i})\mv + \ev_{n,\Zv},\vspace{1em}\\*
	\log(F_{\frac{n T}{M},i}) = Z_{\frac{n T}{M},i} + \log(S_{\frac{n T}{M},i}) + r \left(T_i -\frac{n T}{M} \right),
\end{cases}
\end{align}
	in which $\phi_{n,i}$ is given by \eqref{eq:phi}, $\muvt_\Sv:=\left(\mu_{1,S}-\frac{1}{2}\sig_{1,S}^2,\dots,\mu_{N,S}-\frac{1}{2}\sig_{N,S}^2\right)^\top$, and $\{\ev_n=(\ev_{n,\Sv}, \ev_{n,\Zv})\}_{n=1}^M$ are independent Gaussian random variables such that $\Eb(\ev_n)=\0$, $\Eb(\ev_{n,\Sv}\ev_{n,\Sv}^\top)=\frac{T}{M}\Sig_\Sv$, $\Eb(\ev_{n,\Sv}\ev_{n,\Zv}^\top)=\left(\Sig_{\Fv\Sv}^\top - \Sig_{\Sv}\right)\diag(\phi_{n,i})$, and $\Eb(\ev_{n,\Zv}\ev_{n,\Zv}^\top)=[\eta_{ij;n}]_{N\times N}$ with $\eta_{ij;n}$ given by \eqref{eq:eta}.\vspace{1em}
\end{corollary}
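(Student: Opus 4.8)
The plan is to handle the three equations in \eqref{eq:Sim-FSZ} separately and then assemble the joint law of the error vectors $\ev_n=(\ev_{n,\Sv},\ev_{n,\Zv})$. The third equation is nothing but the definition \eqref{eq:Z} of $Z_{t,i}$ rearranged as $\log F_{t,i}=Z_{t,i}+\log S_{t,i}+r(T_i-t)$ and evaluated at $t=nT/M$, so it needs no argument. The second equation is exactly the conclusion of Proposition \ref{prop:Z-Sim}, which holds for arbitrary parameters; the standing assumption $\eta_{i,S}=0$ only forces $\kap_i=-\eta_{i,F}$ and leaves the $\Zv$-recursion untouched, so I would simply invoke that proposition. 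This leaves the first equation and the covariance structure as the real content.

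For the first equation, I would set $\etav_\Sv(t)\equiv\0$ in the vector SDE for $\Sv_t$ from the end of Section \ref{sec:model}, reducing it to $\dd\Sv_t=\diag(\Sv_t)[\muv_\Sv\,\dd t+\Sigt_{\Sv\Fv}\,\dd\Wv_{t,1}+\Sigt_\Sv\,\dd\Wv_{t,2}]$. Applying It\^o's lemma componentwise to $\log S_{t,i}$ yields the drift $\mu_{i,S}-\tfrac12\sig_{i,S}^2$ (recalling that $\sig_{i,S}^2=\sig_{N+i,N+i}$ is the $i$-th diagonal entry of $\Sig_\Sv$) with an unchanged diffusion, and integrating over $[(n-1)T/M,\,nT/M]$ produces the stated recursion with $\muvt_\Sv$ and with $\ev_{n,\Sv}:=\int_{(n-1)T/M}^{nT/M}(\Sigt_{\Sv\Fv}\,\dd\Wv_{s,1}+\Sigt_\Sv\,\dd\Wv_{s,2})$.

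Both $\ev_{n,\Sv}$ and $\ev_{n,\Zv}$ (the latter from \eqref{eq:eZ}) are It\^o integrals of deterministic integrands over $[(n-1)T/M,\,nT/M]$, so each is centered Gaussian, and the family $\{\ev_n\}_{n=1}^M$ is independent because the intervals are disjoint and $\Wv$ has independent increments. The two diagonal blocks then follow from It\^o isometry: for $\Eb(\ev_{n,\Sv}\ev_{n,\Sv}^\top)$ I would use independence of $\Wv_1,\Wv_2$ and the block identity $\Sigt_{\Sv\Fv}\Sigt_{\Sv\Fv}^\top+\Sigt_\Sv\Sigt_\Sv^\top=\Sig_\Sv$, read off from $\Sig=\Sigt\Sigt^\top$, to obtain $\tfrac{T}{M}\Sig_\Sv$, while $\Eb(\ev_{n,\Zv}\ev_{n,\Zv}^\top)=[\eta_{ij;n}]$ is inherited directly from Proposition \ref{prop:Z-Sim}.

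The cross-covariance is the one step needing care, and I expect it to be the main obstacle. I would first expand $\Sigt_\Zv=[I_N,-I_N]\Sigt=[\Sigt_\Fv-\Sigt_{\Sv\Fv},\,-\Sigt_\Sv]$, so that with $D_n(s):=\diag((T_i/(T_i-s))^{\kap_i})$ one has $\ev_{n,\Zv}=\int D_n(s)[(\Sigt_\Fv-\Sigt_{\Sv\Fv})\,\dd\Wv_{s,1}-\Sigt_\Sv\,\dd\Wv_{s,2}]$. It\^o isometry together with independence of $\Wv_1,\Wv_2$ then gives $\Eb(\ev_{n,\Sv}\ev_{n,\Zv}^\top)=\int_{(n-1)T/M}^{nT/M}[\Sigt_{\Sv\Fv}\Sigt_\Fv^\top-\Sigt_{\Sv\Fv}\Sigt_{\Sv\Fv}^\top-\Sigt_\Sv\Sigt_\Sv^\top]D_n(s)\,\dd s$, and using $\Sig_{\Fv\Sv}^\top=\Sigt_{\Sv\Fv}\Sigt_\Fv^\top$ together with the $\Sig_\Sv$ identity above collapses the bracket to the constant matrix $\Sig_{\Fv\Sv}^\top-\Sig_\Sv$, which factors out of the integral. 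Finally I would identify $\int_{(n-1)T/M}^{nT/M}D_n(s)\,\dd s=\diag(\phi_{n,i})$ through the same antiderivative computation (split into the cases $\kap_i\neq1$ and $\kap_i=1$) already used for $\phi_{n,i}$ in \eqref{eq:phi}, thereby obtaining $\Eb(\ev_{n,\Sv}\ev_{n,\Zv}^\top)=(\Sig_{\Fv\Sv}^\top-\Sig_\Sv)\diag(\phi_{n,i})$, as claimed.
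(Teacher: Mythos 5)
Your proposal is correct and follows the same route as the paper: reduce the spot dynamics to a geometric Brownian motion under $\eta_{i,S}=0$, discretize $\log\Sv$ by It\^o's lemma, invoke Proposition \ref{prop:Z-Sim} for the $\Zv$-recursion, and read the covariance structure off the It\^o integrals \eqref{eq:eZ} and \eqref{eq:eS}. Your explicit verification of the cross-covariance identity $\Sigt_{\Sv\Fv}\Sigt_\Fv^\top-\Sigt_{\Sv\Fv}\Sigt_{\Sv\Fv}^\top-\Sigt_\Sv\Sigt_\Sv^\top=\Sig_{\Fv\Sv}^\top-\Sig_\Sv$ is a correct expansion of a step the paper leaves implicit.
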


\begin{proof}
	Under the assumption $\eta_{i,S}=0$, $i\in\{1,\dots, N\}$, \eqref{eq:Si} becomes a geometric Brownian motion (GBM),
	\begin{align}
		\dd \Sv_t = \diag(\Sv_t)\left[\muv_\Sv \dd t + [\Sigt_{\Sv\Fv}, \Sigt_\Sv] \dd \Wv_t\right].
	\end{align}
	This GBM is discretized using the first equation in \eqref{eq:Sim-FSZ}
	% \begin{align}
	% 	\ln\left(\Sv_{\frac{nT}{M}}\right) = \ln\left(\Sv_{\frac{(n-1)T}{M}}\right) + \frac{T}{M} \left(\mu_{1,S}-\frac{1}{2}\sig_{1,S}^2,\dots,\mu_{N,S}-\frac{1}{2}\sig_{N,S}^2\right)^\top + \ev_{n,\Sv},
	% \end{align}
	in which 
	\begin{align}\label{eq:eS}
		\ev_{n,\Sv} := [\Sigt_{\Sv\Fv}, \Sigt_\Sv]\left(\Wv_{\frac{n T}{M}}-\Wv_{\frac{(n-1)T}{M}}\right).
	\end{align}
	The result then follows from the discretization of $\Zv_t$ in Proposition \ref{prop:Z-Sim} and its proof. Specifically, the relationship between $\ev_{n,\Sv}$ and $\ev_{n,\Zv}$ follows from \eqref{eq:eZ} and \eqref{eq:eS}.
\end{proof}

\begin{figure}[p]
\centerline{
% \fbox{
\adjustbox{trim={0.0\width} {0.0\height} {0.0\width} {0.0\height},clip}
{\includegraphics[scale=0.23, page=1]{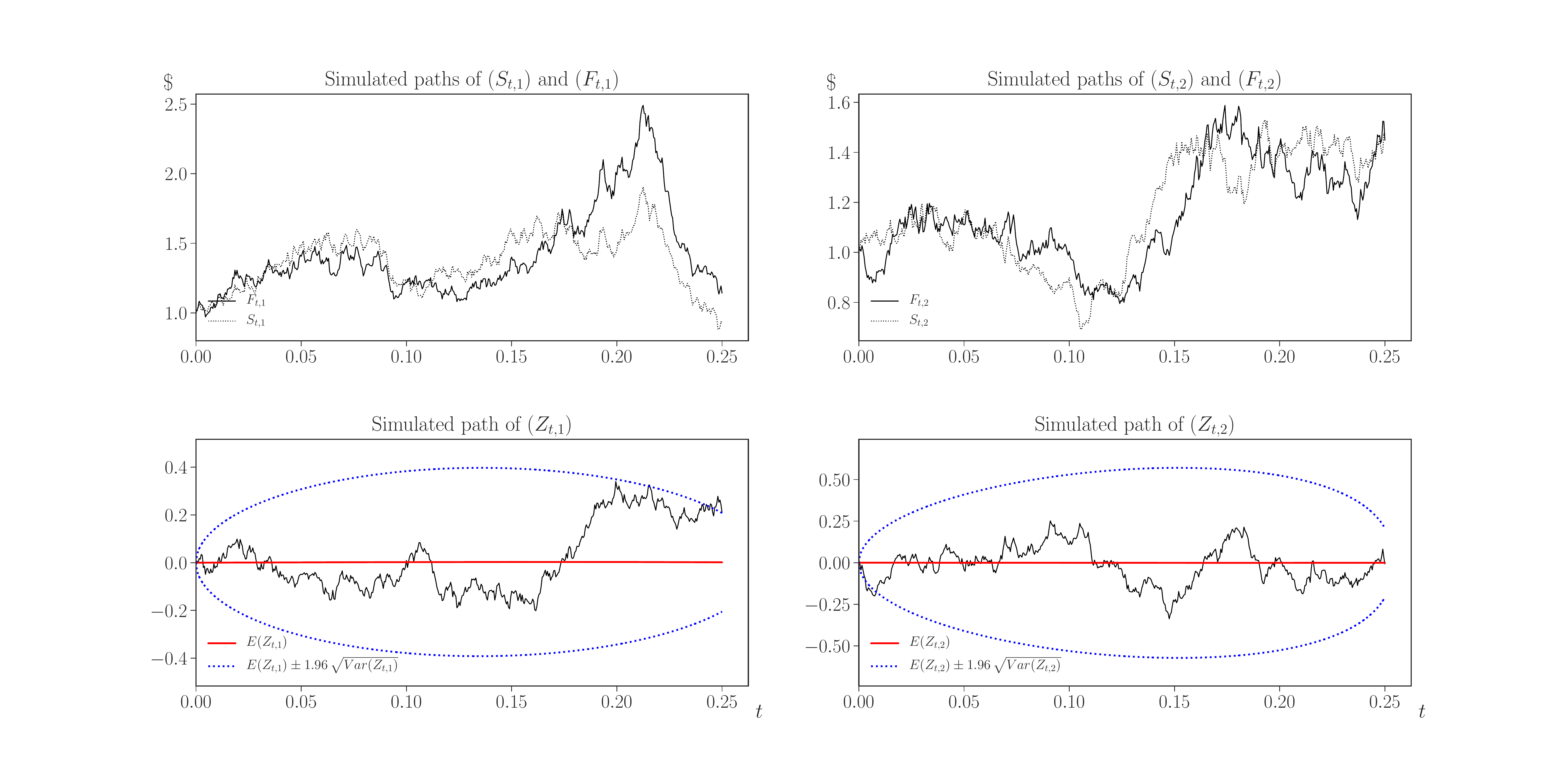}}
% }
}
\centerline{
% \fbox{
\adjustbox{trim={0.0\width} {0.0\height} {0.0\width} {0.0\height},clip}
{\includegraphics[scale=0.32, page=1]{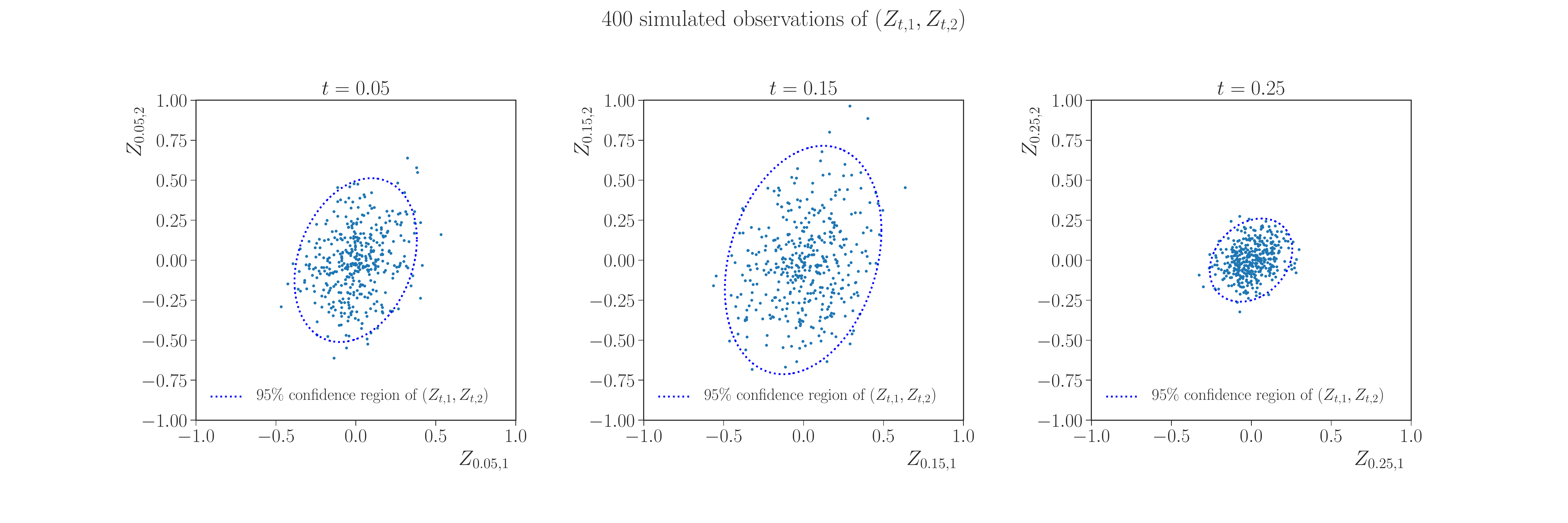}}
% }
}
\caption{\emph{Top:} Paths of $\Sv$ and $\Fv$ for two pairs of futures and spot (i.e. $N=2$) for one simulated scenario generated from \eqref{eq:Sim-FSZ}.
\emph{Middle:} The corresponding paths of $\Zv$.
\emph{Bottom:} Simulated values of $(Z_{t,1}, Z_{t,2})$ based on 400 Simulated paths observed at three different times: $t=0.05$ (left), $t=0.15$ (middle), and $t=0.25$ (right).
In the middle and bottom rows, the dotted lines represent the border of the 95\% confidence region for $(Z_{t,1}, Z_{t,2})$ based on Remark \ref{rem:Z-Gaussian} and conditional on $\Zv_0=(0,0)$.
\emph{Parameters:} $r=0.01$, $\muv_\Fv=(0.12,0.13)$, $\muv_\Sv=(0.1, 0.15)$, $(\eta_{1,F}, \eta_{2,F})=(-1,-1.5)$, $(\eta_{1,S}, \eta_{2,S})=(0,0)$, $T=0.25$, $(T_1,T_2)=(0.27, 0.26)$, and $\Sig = \protect\begin{bmatrix}
	1.0 & 0.3 & 0.7 & 0.2\protect\\
	0.3 & 1.0 & 0.1 & 0.5\protect\\
	0.7 & 0.1 & 1.0 & 0.2\protect\\
	0.2 & 0.5 & 0.2 & 1.0
\protect\end{bmatrix}$.\vspace{1em}
\label{fig:sim-FSZ}}
\vspace{1em}
\end{figure}

Figure \ref{fig:sim-FSZ} illustrates the simulated paths of $\Zv$, $\Sv$, and $\Fv$ for two pairs of futures and underlying assets (i.e. $N=2$) based on the discretization scheme \eqref{eq:Sim-FSZ}. The middle plots for $(Z_{t,1})$ and $(Z_{t,2})$ also show the 95\% confidence intervals of the log-bases. These plots showcase two characteristics of the log-bases. Firstly, they are mean-reverting in that any deviation from their mean is corrected. Secondly, they partially converge to zero at the end of the trading horizon ($T=0.25$) as evident by narrowing of the confidence intervals. Indeed, $(Z_{t,1})$ and $(Z_{t,2})$ are Brownian bridges that converge to zero at $T_1=0.27$ and $T_2=0.254$, respectively. As emphasized before, this convergence is not realized since trading stops at $T=0.25$.

The bottom plots of Figure \ref{fig:sim-FSZ} illustrate the joint behavior of $\Zv_t=(Z_{t,1},Z_{t,2})$. It shows the scatter plots of values of $(\Zv_t)$ at three different time frames (i.e. $t=0.05$, $t=0.15$, and $t=T=0.25$) for 400 simulated paths generated by the discretization scheme \eqref{eq:Sim-FSZ}. Since, $\Zv_t|\Zv_0$ is a multivariate normal random variable, we have that
\begin{align*}
    \big(\Zv_t|\Zv_0 - \Eb(\Zv_t|\Zv_0)\big)^\top \Cov(\Zv_t|\Zv_0)^{-1}\big(\Zv_t|\Zv_0 - \Eb(\Zv_t|\Zv_0)\big)
\end{align*}
has chi-squared distribution with $N$ degrees of freedom. This relationship is utilized for obtaining the 95\% confidence regions of $\Zv_t$ in Figure \ref{fig:sim-FSZ} represented by dashed blue ellipses. These plots also illustrate partial convergence of log-bases at the end of time horizon.

\section{Optimal trading of a basket of futures}\label{sec:Futures}

Most speculative futures trading strategies involve only the futures and not the underlying assets. Indeed, many underlying assets cannot be easily traded speculatively (e.g. agricultural commodities). In other cases where the underlying is tradable (e.g. precious metals or stock futures), futures are traded \emph{``as a proxy for''} their underlying assets because of their attractive trading characteristics such as leverage and the ease of taking short positions.

Motivated by these practical applications, we formulate and solve the optimal investment problem faced by a trader who invest in the market but only trades the futures contracts. This is an optimal investment problem in an incomplete market, since the underlying assets are not tradable.

Let $\tht_{t,i}$ be the notional value\footnote{That is, the number of futures contracts held multiplied by the futures price.} invested in $F_i$ at $t\in[0,T]$. The trader's wealth, denoted by $(X_t)_{0\le t\le T}$, then satisfies
\begin{align}\label{eq:BudgetF}
d X_t &= \sum_{i=1}^{2} \tht_{t,i} \frac{\dd F_{t,i}}{F_{t,i}} + r X_t \dd t\\
&= \left[r X_t + \thtv_t^\top \left(\muv_\Fv + \etav_F(t)\Zv_t\right)\right]\dd t
  + \thtv_t^\top \Sigt_{\Fv} \dd \Wv_{t,1},
\end{align}
with $X_0=x>0$. An $(\Fc_t)_{0\le t\le T}$-adapted process $\big(\thtv_t^\top=(\tht_{t,1},\ldots,\tht_{t,N})\big)_{0\le t\le T}$ is an \emph{admissible trading strategy} if it satisfies the following conditions,
\begin{enumerate}
	\item[(i)] $\sum_{i=1}^N\int_0^T\left[\tht_{t,i}^2 + |Z_{t,i}\tht_{t,i}|\right] \dd t <\infty$, $\Pb$-a.s.; and,
	\item[(ii)] $X_t>0$ $\Pb-a.s.$ for all $t\in[0,T]$, where $(X_t)_{0\le t\le T}$ is given by \eqref{eq:BudgetF}.
\end{enumerate}
The set of all admissible policy is denoted by $\Ac_F$.

We assume that the trader's objective is to maximize her expected utility of terminal wealth. Thus, the trader faces the following stochastic control problem
\begin{align}\label{eq:VF-F}
	\VF(t,x,\zv) := \sup_{\thtv\in\Ac_F} \Eb_{t,x,\zv}\left(\frac{X_T^{1-\gam}}{1-\gam}\right);\quad (t,x,\zv)\in[0,T]\times\Rb_+\times\Rb^N.
\end{align}
in which $\Eb_{t,x,\zv}(\cdot) := \Eb(\cdot|X_t=x, \Zv_t=\zv)$ and $\gam>1$ is the trader's constant relative risk aversion parameter.

As the reader may have noticed, we only consider power utilities with $\gam>1$, that is, we have excluded power utilities that are more risk seeking than the logarithmic utility. The analysis of the case $\gam\in(0,1)$ is more intricate and, in particular, involves identifying the so-called \emph{nirvana solutions} where the expected utility becomes infinite. We briefly discuss nirvana solutions in Figure \ref{fig:gam_sens} below. For a more complete treatment of the subject, we refer the reader to \cite{AngoshtariLeung2019} for a detailed analysis of nirvana solutions in a special case of our market model where there is only one pair of futures and underlying assets. To keep the discussion less technical, however, we have decided to focus on the well-posed case $\gam>1$.

%  are tradable by imposing the constraint
% \begin{align}
% 	\pi_{t,i}=0;\quad 0\le t\le T, i\in\{1,\dots,N\},
% \end{align}
% in \eqref{eq:Budget}. Let $\Ac_F$ denote the set of all admissible policies that only trade the futures, that is, the set of all $(\Fc_t)_{0\le t\le T}$-adapted process $\thtv_t^\top=(\tht_{t,1},\ldots,\tht_{t,N})$ such that,
% \begin{enumerate}
% 	\item[(i)] $\sum_{i=1}^N\int_0^T\left[\tht_{t,i}^2 + \frac{|Z_{t,i}\tht_{t,i}|)}{T_i-t}\right] \dd t <\infty$, $\Pb$-a.s.; and,
% 	\item[(ii)] $X_t>0$ $\Pb-a.s.$ for all $t\in[0,T]$, in which $(X_t)_{0\le t\le T}$ is given by
% 	% \begin{align}\label{eq:BudgetF}
% % 		d X_t &= \sum_{i=1}^{2} \tht_{t,i} \frac{\dd F_{t,i}}{F_{t,i}} + r X_t \dd t\\*
% % 		&= \left[r X_t + \thtv_t^\top \left(\muv_\Fv + \etav_F(t)\Zv_t\right)\right]\dd t
% % 		  + \thtv_t^\top \Sigt_{\Fv} \dd \Wv_{t,1},
% % 	\end{align}
% % 	with $X_0=x>0$.
% \end{enumerate}
% The trader's value function is then given by
% % \begin{align}\label{eq:VF-F}
% % 	\VF(t,x,\zv) := \sup_{\thtv\in\Ac_F} \Eb_{t,x,\zv}\left(\frac{X_T^{1-\gam}}{1-\gam}\right);\quad (t,x,\zv)\in[0,T]\times\Rb_+\times\Rb^N.
% % \end{align}

The following result provides the solution of this stochastic control problem.  To facilitate presentation, the following notations are used.
\begin{align}\label{eq:A}
	\AF &:= \Sig_\Fv + \Sig_{\Fv\Sv}^\top\Sig_\Fv^{-1}\Sig_{\Fv\Sv} - \Sig_{\Fv\Sv} - \Sig_{\Fv\Sv}^\top,\quad \BF := I_N - \Sig_\Fv^{-1}\Sig_{\Fv\Sv},
\end{align}
\begin{align}\label{eq:EClam}
	C :=
	\begin{pmatrix}
		I_N\\
		-I_N
	\end{pmatrix},
	\quad\text{and}\quad
	\etav(t) &:=
	\begin{pmatrix}
		\etav_F(t)\\
		\etav_S(t)
	\end{pmatrix},
\end{align}
in which $I_N$ is the $N\times N$ identity matrix and $\Sig_\Fv$ and $\Sig_{\Fv\Sv}$ are given by \eqref{eq:Cov}.\pagebreak

\begin{theorem}\label{thm:Futures}
	The following statements hold.
   	\begin{enumerate}
   		\item[(i)] The matrix Riccati differential equation below has a unique solution that is positive definite for all $\tau\in(0,T]$,
   	\begin{align}\label{eq:H-ODE-F}
   	\begin{cases}
		H'(\tau) 
		+ H(\tau)\, \big(\gam\,C^\top\Sig\,C + (1-\gam)A\big)\, H(\tau)\\*
		\hspace{3em}- 2\left(\etav(T-\tau)^\top C + \left(\frac{1}{\gam}-1\right)\etav_F(T-\tau)^\top B\right)\, H(\tau)\\*
		\hspace{3em}- \frac{\gam-1}{\gam^2} \etav_F(T-\tau)^\top \Sig_\Fv^{-1}\etav_F(T-\tau)=0;
   		\quad0\le \tau\le T,\\
   		H(0)=0_{N\times N}.
   	\end{cases}
   	\end{align}
		
   		\item[(ii)] The value function in \eqref{eq:VF-F} is given by
   		\begin{align}\label{eq:VF-sol-F}
   			\VF(t,x,\zv) &=\frac{x^{1-\gam}}{1-\gam}
   			\ee^{\gam \left(f(T-t) + \zv^\top \gv(T-t) - \frac{1}{2} \zv^\top H(T-t)\zv\right)},
   		\end{align}
   		for $(t,x,\zv)\in[0,T]\times\Rb_+\times\Rb^N$, in which $\gv(\tau)=\big(g_1(\tau),\ldots, g_N(\tau)\big)^\top$ satisfies
   		\begin{align}\label{eq:g-ODE-F}
   		\begin{cases}
   			\gv'(\tau) = \bigg[\etav(T-\tau)^\top C - \gam\,H(\tau)\,C^\top\Sig\,C\\*
			\qquad\qquad{}- (1-\gam)H(\tau)A 
		+ \left(\frac{1}{\gam}-1\right)\etav_F(T-\tau)^\top B\bigg]\,\gv(\tau)\\*
   			\qquad\qquad-H(T-\tau)\left(\mv + \left(\frac{1}{\gam}-1\right) B^\top\muv_\Fv\right)\\*
			\qquad\qquad{}+\frac{1-\gam}{\gam^2}\,\etav_F(T-\tau)^\top\Sig_\Fv^{-1}\muv_\Fv;
   			\qquad\qquad 0\le \tau\le T,\\
   			\gv(0)= 0_{N\times 1},
   		\end{cases}
   		\end{align}
   		and $f(\tau)$ is given by
   		\begin{align}\label{eq:f-integral-F}
   			f(\tau) ={}&\frac{1-\gam}{\gam}\left(r + \frac{\muv_\Fv^\top\Sig_\Fv^{-1}\muv_\Fv}{2\gam}\right)\tau\\*
   			&+\int_0^\tau \bigg[
   			\frac{1}{2}\gv(u)^\top \left((1-\gam) A + \gam\, C^\top \Sig\,C\right) \gv(u)\\
   			&\hspace{3em} + \left(\mv+\left(\frac{1}{\gam}-1\right)B^\top \muv_\Fv\right)^\top\gv(u)\\*
			&\hspace{3em}{}-\frac{1}{2}\tr\big(C^\top\Sig\, C\, H(u)\big)
   			\bigg]\dd u,
   		\end{align}
   		for $0\le \tau\le T$.
		
   		\item[(iii)] The optimal trading strategy for the futures contracts is $\Big(\thtv^*(t,X_t,\Zv_t)\Big)_{0\le t\le T}$ where
   		\begin{align}\label{eq:OptimalStrat-F}
   			\thtv^*(t,x,\zv)
   			:=
   			x
   			\bigg[
   			&\frac{1}{\gam}\Sig_{\Fv}^{-1}\muv_\Fv
   			+ B\,\gv(T-\tau) \\*
			&{} + \left(\frac{1}{\gam}\Sig_\Fv^{-1}\etav_F(t) - B\, H(T-\tau)\right)\zv
   			\bigg],
   		\end{align}
   		for $t,x,\zv\in[0,T]\times\Rb_+\times\Rb^N$.
   	\end{enumerate}\vspace{1ex}
\end{theorem}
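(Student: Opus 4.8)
The plan is to solve \eqref{eq:VF-F} by dynamic programming: I would guess the separable ansatz \eqref{eq:VF-sol-F}, reduce the Hamilton--Jacobi--Bellman (HJB) equation to the ODE system \eqref{eq:H-ODE-F}--\eqref{eq:f-integral-F}, and then confirm optimality by a verification argument. Using \eqref{eq:BudgetF} and \eqref{eq:Z-SDE}, the generator of $(X_t,\Zv_t)$ under a control $\thtv$, acting on a smooth $V(t,x,\zv)$, is
\begin{align*}
	\Lc^{\thtv}V ={}& V_t + \big(rx + \thtv^\top(\muv_\Fv+\etav_F(t)\zv)\big)V_x + (\mv - K(t)\zv)^\top\nabla_\zv V\\
	&{}+ \tfrac12\thtv^\top\Sig_\Fv\thtv\,V_{xx} + \thtv^\top(\Sig_\Fv - \Sig_{\Fv\Sv})\nabla_\zv V_x + \tfrac12\tr\big(\Sig_\Zv\,\nabla^2_\zv V\big),
\end{align*}
where the cross term arises from the covariation between $X$ (driven by $\Wv_{t,1}$) and $\Zv$ (driven by $\Sigt_\Zv\,\dd\Wv_t$), whose relevant coefficient is $\Sig_\Fv-\Sig_{\Fv\Sv}$. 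The HJB equation is $\sup_{\thtv}\Lc^{\thtv}V=0$ with terminal datum $V(T,x,\zv)=x^{1-\gam}/(1-\gam)$.

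Next I would carry out the pointwise maximization over $\thtv$. Under the ansatz one has $V_{xx}<0$ (since $\gam>1$ and the exponential factor is positive), so the Hamiltonian is strictly concave and the first-order condition gives the feedback
\[
	\thtv^* = -\tfrac{1}{V_{xx}}\,\Sig_\Fv^{-1}\big[(\muv_\Fv+\etav_F(t)\zv)\,V_x + (\Sig_\Fv-\Sig_{\Fv\Sv})\,\nabla_\zv V_x\big].
\]
Evaluating this on \eqref{eq:VF-sol-F}, for which $-V_x/V_{xx}=x/\gam$ and $\nabla_\zv V_x=\gam V_x(\gv-H\zv)$, and using $\Sig_\Fv^{-1}(\Sig_\Fv-\Sig_{\Fv\Sv})=B$, collapses $\thtv^*$ to the affine-in-$\zv$ form \eqref{eq:OptimalStrat-F}. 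Substituting $\thtv^*$ back into the HJB and dividing by $\gam V$ yields an identity that is a quadratic polynomial in $\zv$. Matching the coefficient of the quadratic form $\zv\zv^\top$, of the linear term, and of the constant produces, respectively, the matrix Riccati equation \eqref{eq:H-ODE-F} for $H$, the linear ODE \eqref{eq:g-ODE-F} for $\gv$ (with $H$-dependent coefficients), and the quadrature \eqref{eq:f-integral-F} for $f$; here I would use $C^\top\Sig\,C=\Sig_\Zv$ and $A=(\Sig_\Fv^{1/2}-\Sig_\Fv^{-1/2}\Sig_{\Fv\Sv})^\top(\Sig_\Fv^{1/2}-\Sig_\Fv^{-1/2}\Sig_{\Fv\Sv})$ to recognize the matrices in the statement, and the terminal datum forces $H(0)=\0$, $\gv(0)=\0$, $f(0)=0$.

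For part (i) I would then prove well-posedness of \eqref{eq:H-ODE-F}. Local existence and uniqueness of a symmetric solution follow from the Cauchy--Lipschitz theorem, the right-hand side being quadratic with coefficients that are smooth on $[0,T]$ because $T<T_i$ keeps $\etav_F(T-\tau)$ bounded. The structural fact that drives the rest is that the quadratic coefficient
\[
	Q:=\gam\,C^\top\Sig\,C+(1-\gam)A=\Sig_\Zv+(\gam-1)(\Sig_\Zv-A)
\]
is positive definite: $\Sig_\Zv-A=\Sig_\Sv-\Sig_{\Fv\Sv}^\top\Sig_\Fv^{-1}\Sig_{\Fv\Sv}$ is the Schur complement of $\Sig_\Fv$ in $\Sig$, hence positive definite, and $\gam>1$. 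Since the inhomogeneous term $\tfrac{\gam-1}{\gam^2}\etav_F^\top\Sig_\Fv^{-1}\etav_F$ is positive definite while $-HQH$ is dissipative, a standard a priori bound for matrix Riccati equations precludes finite-time blow-up and yields existence on all of $[0,T]$; starting from $H(0)=\0$ with $H'(0)=\tfrac{\gam-1}{\gam^2}\etav_F^\top\Sig_\Fv^{-1}\etav_F\succ\0$, a comparison argument then gives $H(\tau)\succ\0$ for $\tau\in(0,T]$. With $H$ bounded, \eqref{eq:g-ODE-F} is linear with continuous coefficients, so $\gv$ exists globally and $f$ is obtained from \eqref{eq:f-integral-F}.

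Finally, I would establish the verification theorem for parts (ii)--(iii). By construction the candidate \eqref{eq:VF-sol-F} is $C^{1,2}$ and solves the HJB equation, and $V\le0$ since $\gam>1$. For any $\thtv\in\Ac_F$, It\^o's formula together with $\Lc^{\thtv}V\le0$ shows $V(s,X_s,\Zv_s)$ is a local supermartingale bounded above; a localization plus Fatou argument then gives $\Eb_{t,x,\zv}[X_T^{1-\gam}/(1-\gam)]\le V(t,x,\zv)$, hence $\VF\le V$. Applying the same computation to $\thtv^*$, for which $\Lc^{\thtv^*}V=0$, upgrades this to equality once I verify that $\thtv^*$ is admissible and that $V(s,X^*_s,\Zv_s)$ is a true (not merely local) martingale. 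I expect this last point to be the main obstacle: under the linear feedback $\thtv^*$ the optimal wealth solves a linear SDE whose coefficients are affine in the Gaussian process $\Zv$, so $X_T^{*,1-\gam}$ is an exponential of a Gaussian quadratic form, and controlling its (uniform) integrability requires exponential-moment bounds that hold precisely because $H$ stays finite and positive definite on $[0,T]$ --- so the global solvability established in part (i) is exactly what closes the verification.
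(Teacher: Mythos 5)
Your derivation follows essentially the same route as the paper: the same generator with cross-variation coefficient $\Sig_\Fv-\Sig_{\Fv\Sv}$, the same first-order condition collapsing to \eqref{eq:OptimalStrat-F} via $\Sig_\Fv^{-1}(\Sig_\Fv-\Sig_{\Fv\Sv})=\BF$, the same quadratic-in-$\zv$ coefficient matching to produce \eqref{eq:H-ODE-F}--\eqref{eq:f-integral-F}, and for part (i) the same key structural fact, namely that $\gam\,C^\top\Sig\,C+(1-\gam)\AF$ equals $C^\top\Sig\,C$ plus $(\gam-1)$ times the Schur complement of $\Sig_\Fv$ in $\Sig$, combined with a comparison against the zero solution (the paper makes this precise by citing Reid's comparison theorem for matrix Riccati equations, which simultaneously delivers global existence, $H\ge0$ on $[0,T]$, and $H\succ0$ on $(0,T]$ from $H'(0)\succ0$).

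The one step that would fail as written is your verification. For $\gam>1$ the candidate $v$ is nonpositive, and a nonpositive local supermartingale need not be a supermartingale: localization gives $\Eb\big[v(T\wedge\tau_n,\cdot)\big]\le v(t,x,\zv)$, but since $v(T\wedge\tau_n,\cdot)\le0$, Fatou (applied to $-v\ge0$) only yields $\Eb\big[v(T,X_T,\Zv_T)\big]\ge\limsup_n\Eb\big[v(T\wedge\tau_n,\cdot)\big]$ --- the inequality points the wrong way, so ``localization plus Fatou'' does not give $\VF\le v$. You have also located the difficulty in the wrong place: for a negative utility the lower bound along $\thtv^*$ is the easy direction (reverse Fatou applied to the local martingale closes it), while the delicate direction is the supermartingale upper bound over all of $\Ac_F$, which genuinely needs an integrability or growth condition. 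The paper resolves this by completing the square in the exponent --- since $H(T-t)\succ0$, the quantity $f+\zv^\top\gv-\tfrac12\zv^\top H\zv$ is bounded above in $\zv$, so $v$ is bounded in $\zv$ with polynomial growth in $x$ --- and then invoking a standard verification theorem (Theorem 3.8.1 of Fleming and Soner). If you want to keep your bare-hands route you would need uniform integrability of $\{v(T\wedge\tau_n,X_{T\wedge\tau_n},\Zv_{T\wedge\tau_n})\}_n$ for every admissible $\thtv$, not just for $\thtv^*$.
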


\begin{proof}
	See Appendix \ref{app:Futures}.
\end{proof}

\begin{figure}[t]
\centerline{
%\fbox{
\adjustbox{trim={0.08\width} {0.0\height} {0.08\width} {0.0\height},clip}
{\includegraphics[scale=0.24, page=1]{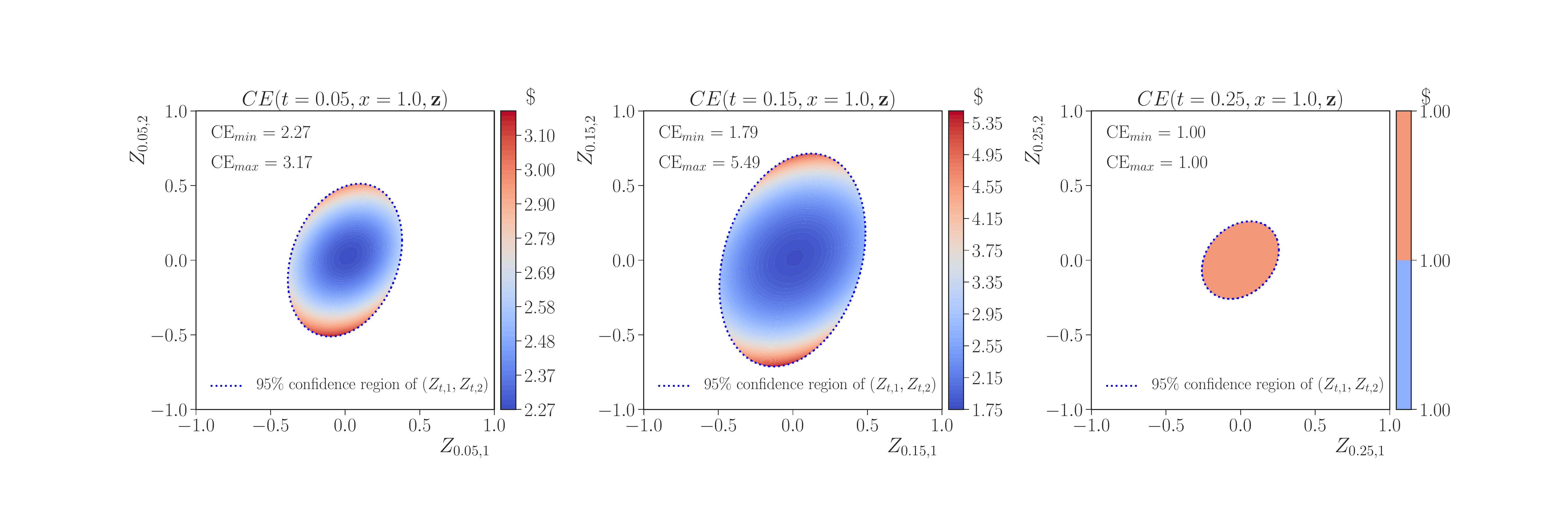}}
%}
}
\centerline{
%\fbox{
\adjustbox{trim={0.08\width} {0.0\height} {0.08\width} {0.0\height},clip}
{\includegraphics[scale=0.24, page=1]{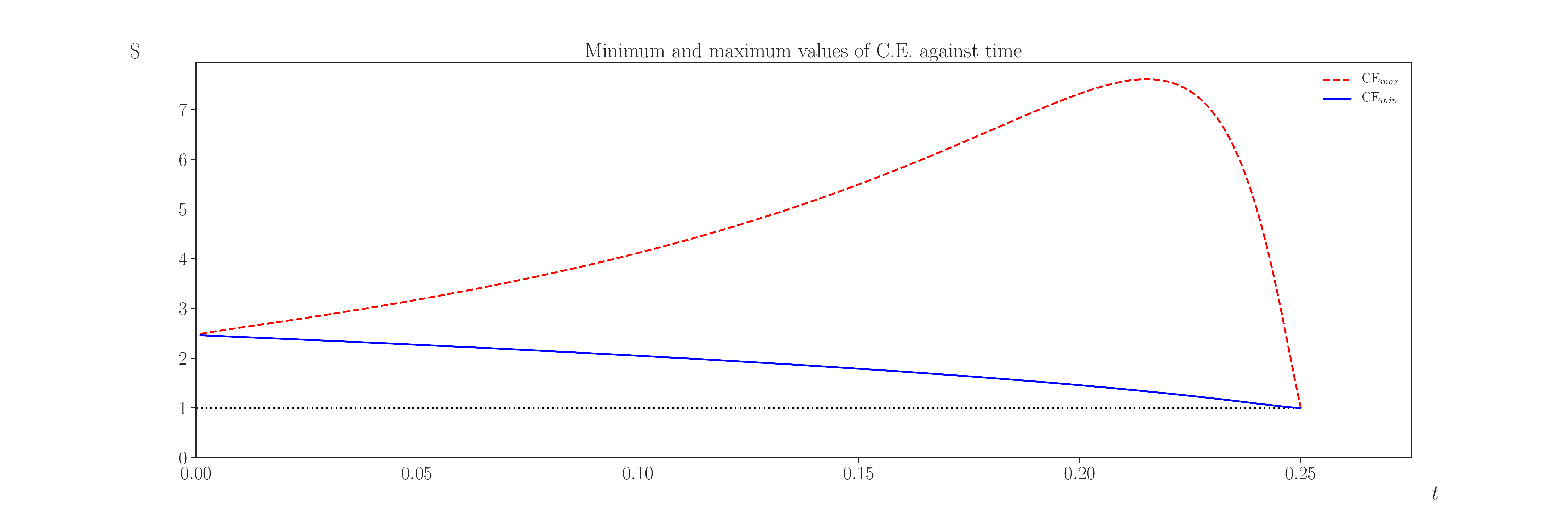}}
%}
}
\caption{\emph{Top:} Certainty equivalent (CE) values corresponding to the value function in Theorem \ref{thm:Futures} (trading only futures) at three different time instances: $t=0.05$ (left), $t=0.15$ (middle), and $t=0.25$ (right). In each plot, the value of wealth is $x=\$1$ and CE values are plotted for different values of log-bases $(Z_{t,1},Z_{t,2})$ in the 95\% confidence region as shown in Figure \ref{fig:sim-FSZ}.
\emph{Bottom:} The minimum (solid blue line) and the maximum (dashed red line) values of CE values (over the 95\% confidence region of $(Z_{t,1}, Z_{t,2})$) as a function of time.
\emph{Parameters:} $\gam=2.0$. The remaining parameters are as in Figure \ref{fig:sim-FSZ}.
\vspace{1em}
\label{fig:optimF_CE}}
\end{figure}

Obtaining the value function and the optimal policy in Theorem \ref{thm:Futures} is numerically feasible. Indeed, the matrix Riccati equation \eqref{eq:H-ODE-F} is a first order non-linear system of ordinary differential equations and can readily be solved numerically. Once \eqref{eq:H-ODE-F} is solved, \eqref{eq:g-ODE-F} becomes a first order linear system of ordinary differential equations and \eqref{eq:f-integral-F} is an integral, both of which can be easily computed. One can then find the value function and the optimal policy by \eqref{eq:VF-sol-F} and \eqref{eq:OptimalStrat-F}, respectively.

The value function is better interpreted by its corresponding certainty equivalent (CE). We denote by $\CE(t,x,\zv)$ the certainty equivalent value of the trader at the state $(t,x,\zv)$. It is the minimum amount of deterministic terminal wealth which the investor prefers over optimally investing in the market starting at the state $(t,x,\zv)$. The CE function thus satisfies
\begin{align}
	\frac{\big(\CE(t,x,\zv)\big)^{1-\gam}}{1-\gam} = V(t,x,\zv),
\end{align}
which, in turn, yields,
\begin{align}\label{eq:CE}
	\CE(t,x,\zv) = x\, \ee^{\left(\frac{\gam}{1-\gam}\right)
	\left(f(T-t) + \zv^\top \gv(T-t) - \frac{1}{2} \zv^\top H(T-t)\zv\right)}.
\end{align}

The top plots in Figure \ref{fig:optimF_CE} illustrate CE values corresponding to the value function given by Theorem \ref{thm:Futures} at three time instances. In each plot, we assume that $x=\$1$ and show $\CE(t,1,\zv)$ for different values of $\zv$ in its 95\% confidence region. Note that since $f$, $\gv$, and $H$ vanish at $T$, from \eqref{eq:CE} it follows that $\CE(t,x,\zv)=x$. This is confirmed numerically by the rightmost plot in Figure \ref{fig:optimF_CE} showing that $\CE(t,1.0,\zv)=1.0$. As expected, the figure shows that profitability is higher for the more extreme values of $\zv$ (i.e. when the log-basis deviates more from its equilibrium state). It also shows that potential for profitability increases and then decreases as time pass by.

The latter point is highlighted in the bottom plot of Figure \ref{fig:optimF_CE} which shows the minimum and maximum values of $\CE(t,1.0,\zv)$ over the 95\% confidence region of $\zv$ and as a function of $t$. As it can be seen, the minimum CE value decreases with time, while the maximum CE value (over the 95\% confidence region) first increases to a maximum point and then decreases to 1 (the value of $x$). This behavior can be explained as follows. As we approach the end of the trading horizon, there is less time to make profit. The force of mean reversion, however, is stronger near the end of the trading horizon (recall that, from \eqref{eq:Bb} the mean reversion coefficient is $\kap_i/(T_i-t)$). At the beginning, the increase in the mean reversion rate plays a more prominent rule and causes the maximum CE value to increase. At the end, the reduction of trading time has a more significant affect and causes the maximum CE value to decrease. Thus, the combined effect of these two opposing forces causes the increase and then decrease in the maximum value of certainty equivalent.

\begin{figure}[t]
\centerline{
% \fbox{
\adjustbox{trim={0.07\width} {0.09\height} {0.06\width} {0.1\height},clip}
{\includegraphics[scale=0.3, page=1]{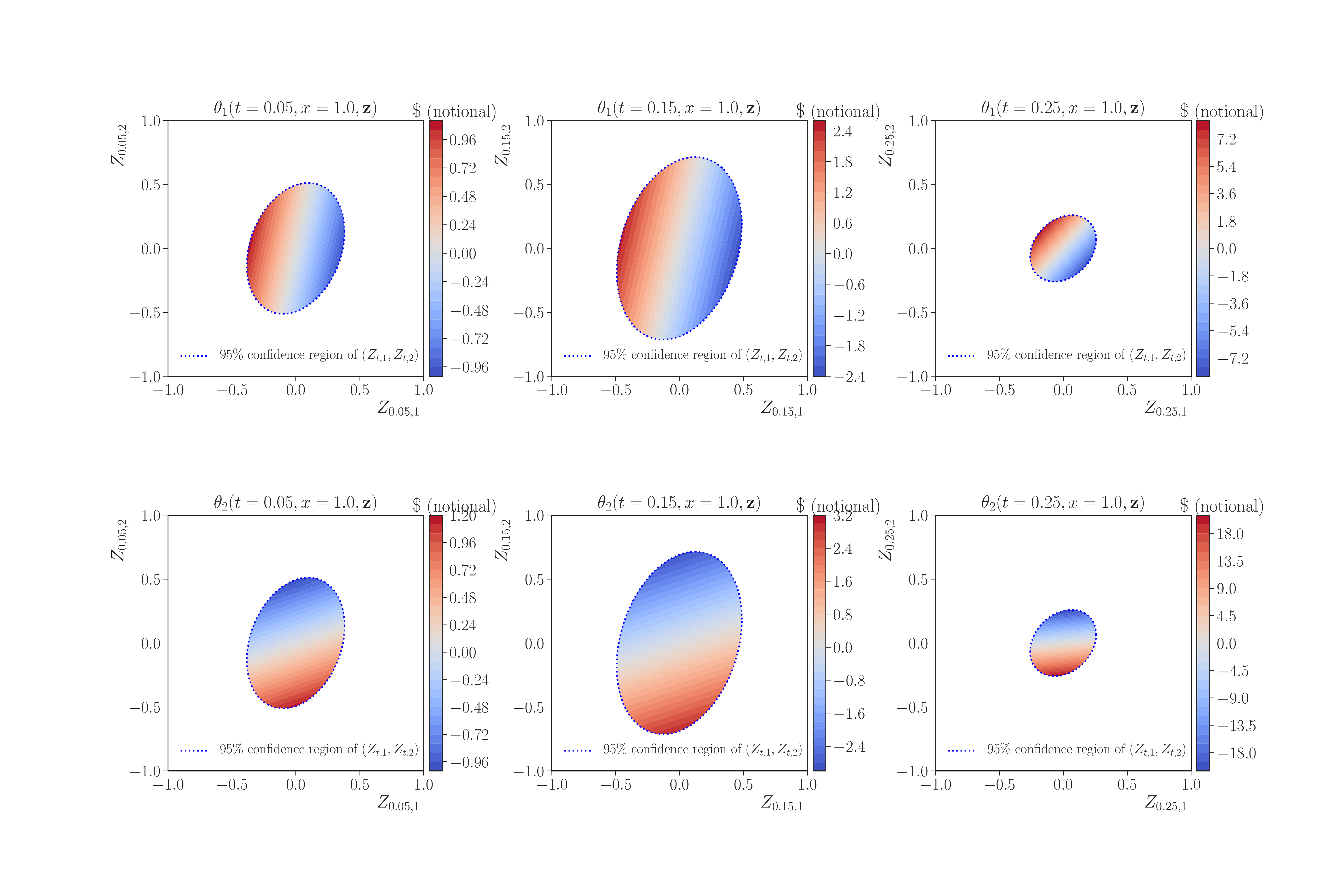}}
% }
}
\caption{Optimal positions for the futures only strategy, at three time instances: $t=0.05$ (left column), $t=0.15$ (middle column), and $t=0.25$ (right column). In each plot, the optimal positions are plotted for fixed value of wealth $x=1$ and different values of the log-bases $(Z_{t,1}, Z_{t,2})$ in their 95\% confidence region.
\vspace{1em}
\label{fig:optimF_theta}}
\end{figure}

Figure \ref{fig:optimF_theta} shows the optimal positions in the futures for three time instances: $t=0.05$ (left column), $t=0.15$ (middle column), and $t=0.25$ (right column). We can make a few observations based on the plots. Firstly, note that the optimal trading strategy for the futures contracts is a general relative value strategy driven by the log basis. In particular, for sufficiently large positive (resp. negative) values of $Z_{t,i}$, it is optimal to take a short (resp. long) position in the $i$-th futures contract. As mentioned earlier, large positive (resp. negative) values of $Z_{t,i}$ indicate that the market is contangoed (resp. backwardated). Therefore, our optimal strategy is consistent with the theory of storage and hedging pressure hypostasis, as discussed in Section \ref{sec:Intro}.

Secondly, the value of the log-basis of a pair of futures and underlying has an effect on the optimal positions of the other pair. This is evident in the plots since the level curves are not vertical or horizontal. For example, the plot for $\tht_1$ (the top-left plot) indicates that the optimal position in $F_1$ changes value if we change $Z_2$ while keep $Z_1$ fixed. This dependence signifies the importance of modeling the futures jointly rather than individually. 

Finally, note that the extreme positions (i.e. the larger positions near the edge of the oval confidence region) become larger as we approach the end of the trading horizon. This is expected since, as mentioned before, the force of mean reversion is stronger near the end of the trading horizon. Therefore, deviations from equilibrium are corrected faster and it is optimal to take larger positions in case of such deviation near the end of the trading horizon.

\section{Optimal trading strategy when the underlying assets are tradable}\label{sec:SpotFutures}
In this section, we consider the scenario in which the trader invests both in the futures contracts and the underlying assets. This problem has limited practical relevance, since most speculative traders don't maintain positions on both futures and the spot.\footnote{A notable exception is ``\emph{basis trading}'', see \cite{AngoshtariLeung2019} for further discussion.} Our main objective for considering this problem is to quantify how much value one would gain by trading the spot, while acknowledging that doing so is not always feasible (or practical). Note, also, that this scenario results in an optimal investment problem in a complete market since the underlying assets are now tradable. 

Let $\Ac$ denote the set of all admissible policies that trade the futures contracts and the underlying assets. In other words, it is the set of all $(\Fc_t)_{0\le t\le T}$-adapted process $\big(\Thtv^\top_t:=(\tht_{t,1},\ldots, \tht_{t,N}, \pi_{t,1},\dots,\pi_{t,N})\big)_{0\le t\le T}$ such that, 
\begin{enumerate}
	\item[(i)] $\sum_{i=1}^N\int_0^T\left[\tht_{t,i}^2 + \pi_{t,i}^2 + |Z_{t,i}|(|\pi_{t,i}|+|\tht_{t,i}|)\right] \dd t <\infty$, $\Pb$-a.s.; and,
	\item[(ii)] $X_t>0$ $\Pb-a.s.$ for all $t\in[0,T]$, where $(X_t)_{0\le t\le T}$ is given by
	\begin{align}\label{eq:Budget}
		d X_t &= \sum_{i=1}^{N} \tht_{t,i} \frac{\dd F_{t,i}}{F_{t,i}} + \sum_{i=1}^N \pi_{t,i}\frac{\dd S_{t,i}}{S_{t,i}} + r\left(X_t - \sum_{i=1}^N \pi_{t,i}\right) \dd t\\*
		% &= \left[r X_t + \thtv_t^\top \left(\alv + \etav_1(t)\Zv_t\right) + \piv_t^\top \left(\muv-\rv+\etav_2(t)\Zv_t\right)\right]\dd t\\
		% &\qquad{}+ (\thtv_t^\top\Sigt_\Fv + \piv_t^\top \Sigt_{\Sv\Fv})\dd \Wv_{t,1} + \piv_t^\top \Sigt_\Sv \dd \Wv_{t,2}\\*
		&= \left[r X_t + \Thtv_t^\top \left(
		\begin{pmatrix}
			\muv_\Fv\\
			\muv_\Sv-\rv
		\end{pmatrix} +
		\begin{pmatrix}
		\etav_F(t)\\
		\etav_S(t)
		\end{pmatrix}\Zv_t\right)\right]\dd t + \Thtv_t^\top \Sigt \dd \Wv_t,
	\end{align}
	with $X_0=x>0$.
\end{enumerate}
Here, $\pi_{t,i}$ (resp. $\tht_{t,i}$) is the cash amount (resp. notional value) invested in $S_i$ (resp. $F_i$) at $t\in[0,T]$ and we have defined $\rv^\top := (r,\ldots,r)_{1\times N}$.

The trader's value function is then given by
\begin{align}
	V(t,x,\zv) := \sup_{\Thtv\in\Ac} \Eb_{t,x,\zv}\left(\frac{X_T^{1-\gam}}{1-\gam}\right);\quad (t,x,\zv)\in[0,T]\times\Rb_+\times\Rb^N,
\end{align}
where, as before, it is assumed that $\gam>1$. The following theorem characterizes the value function and the optimal trading strategy. In its statement, we have used the notation:
\begin{align}
	\muv :=
	\begin{pmatrix}
		\muv_\Fv\\
		\muv_\Sv-\rv
	\end{pmatrix}.
\end{align}
Recall, also, that $\etav$ and $C$ are given by \eqref{eq:EClam}.

\begin{figure}[p]
\centerline{
% \fbox{
\adjustbox{trim={0.0\width} {0.0\height} {0.0\width} {0.0\height},clip}
{\includegraphics[scale=0.3, page=1]{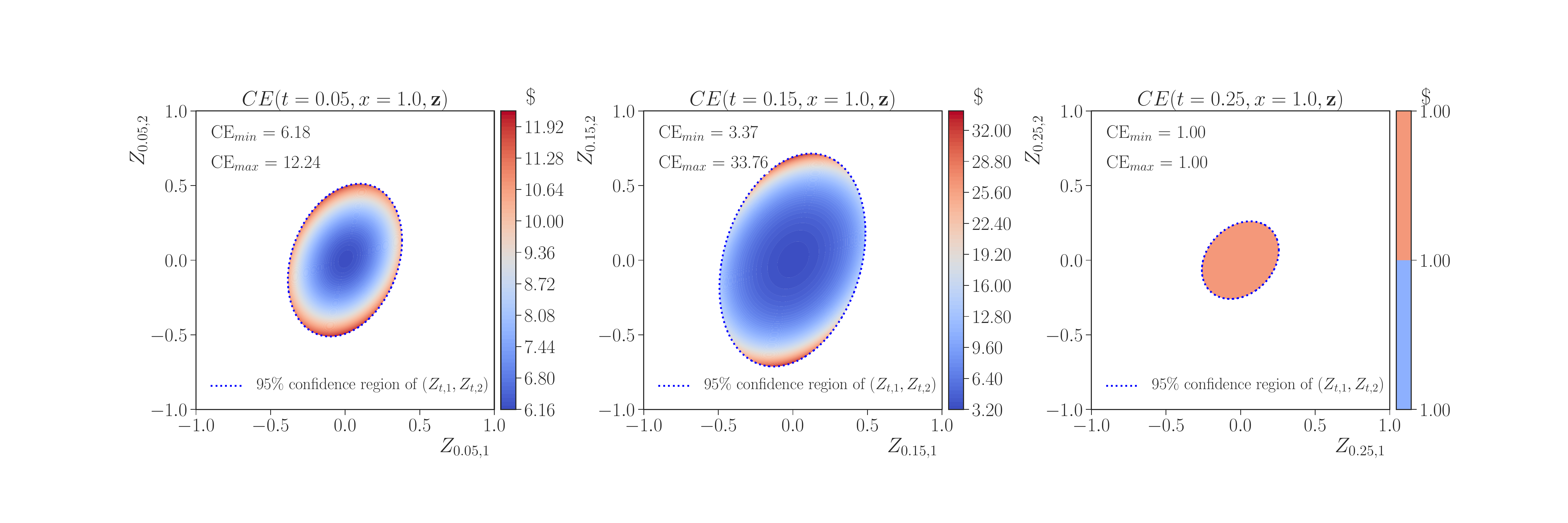}}
% }
}
\centerline{
% \fbox{
\adjustbox{trim={0.0\width} {0.0\height} {0.0\width} {0.0\height},clip}
{\includegraphics[scale=0.3, page=1]{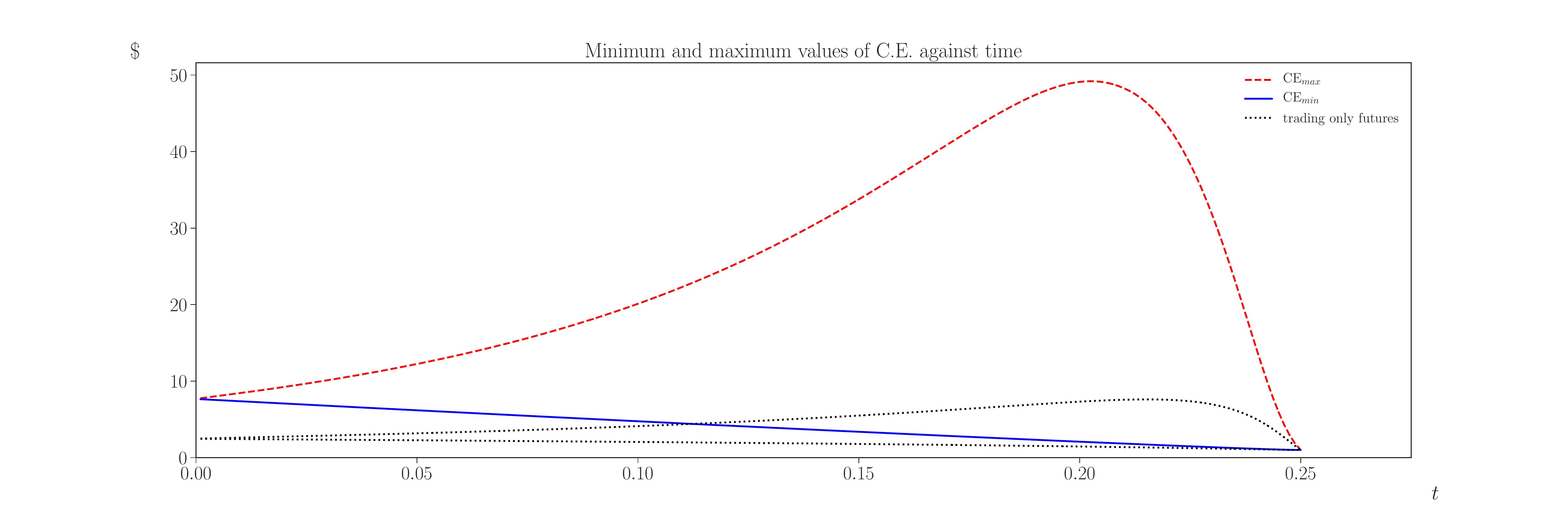}}
% }
}
\caption{\emph{Top:} Certainty equivalent (CE) values corresponding to the value function in Theorem \ref{thm:VF} (trading futures and spots) at three different time instances: $t=0.05$ (left), $t=0.15$ (middle), and $t=0.25$ (right). In each plot, the value of wealth is $x=\$1$ and CE values are plotted for different values of log-bases $(Z_{t,1},Z_{t,2})$ in the 95\% confidence region as shown in Figure \ref{fig:sim-FSZ}.
\emph{Bottom:} The minimum (solid blue line) and the maximum (dashed red line) values of CE values when trading both futures and spots (over the 95\% confidence region of $(Z_{t,1}, Z_{t,2})$) as a function of time. The dotted black lines are the corresponding values taken from the bottom plot of Figure \ref{fig:optimF_CE}, i.e. the minimum and maximum CE values when trading only the futures. The plot shows considerable gain in CE values when the underlying is tradable.
\emph{Parameters:} $\gam=2.0$. The remaining parameters are as in Figure \ref{fig:sim-FSZ}.
\vspace{1em}
\label{fig:optimFS_CE}}
\vspace{1em}
\end{figure}

\begin{theorem}\label{thm:VF}
	The following statements are true.
	\begin{enumerate}
		\item[(i)] The matrix Riccati differential equation
	\begin{align}\label{eq:H-ODE}
	\begin{cases}
		H'(\tau) +
		H(\tau) C^\top \Sig\, C\, H(\tau)
		- \frac{2}{\gam}\,\etav(T-\tau)^\top C\, H(\tau)\\
		\hspace{3em}- \frac{\gam-1}{\gam^2}\,\etav(T-\tau)^\top \Sig^{-1}\etav(T-\tau)=0;
		\quad0\le \tau\le T,\\
		H(0)=0_{N\times N},
	\end{cases}
	\end{align}
		has a unique symmetric solution $H(\tau)$ that is positive definite for all $\tau\in(0,T]$.
		
		\item[(ii)] The value function is given by
		\begin{align}\label{eq:VF-sol}
			V(t,x,\zv) &=\frac{x^{1-\gam}}{1-\gam}
   			\ee^{\gam \left(f(T-t) + \zv^\top \gv(T-t) - \frac{1}{2} \zv^\top H(T-t)\zv\right)},
		\end{align}
		for $(t,x,\zv)\in[0,T]\times\Rb_+\times\Rb^N$, in which $\gv(\tau)=\big(g_1(\tau),\ldots, g_N(\tau)\big)^\top$ satisfies
		\begin{align}\label{eq:g-ODE}
		\begin{cases}
			\gv'(\tau) = \left[\frac{1}{\gam} \etav(T-\tau)^\top - H(\tau)C^\top\Sig\right]C\,\gv(\tau)\\*
			\qquad\qquad-H(T-\tau)\left(\mv + \left(\frac{1}{\gam}-1\right) C^\top\muv\right)\\*
			\qquad\qquad{}+\frac{1-\gam}{\gam^2}\,\etav(T-\tau)^\top\Sig^{-1}\muv;
			\qquad\qquad0\le \tau\le T,\\
			\gv(0)= 0_{N\times 1},
		\end{cases}
		\end{align}
		and $f(\tau)$ is given by
		\begin{align}\label{eq:f-integral}
			f(\tau) ={}&\frac{1-\gam}{\gam}\left(r + \frac{\muv^\top\Sig^{-1}\muv}{2\gam}\right)\tau\\*
			&+\int_0^\tau \bigg[
			\frac{1}{2}\gv(u)^\top C^\top \Sig\, C\, \gv(u)
			-\frac{1}{2}\tr\big(C^\top\Sig\, C\, H(u)\big)\\*
			&\hspace{5em}{}
			+ \left(\mv^\top+\left(\frac{1}{\gam}-1\right)\muv^\top C\right)\,\gv(u)
			\bigg]\dd u,
		\end{align}
		for $0\le \tau\le T$.
		
		\item[(iii)] The optimal policy is $\Big(\Thtv^*(t,X_t,\Zv_t)\Big)_{0\le t\le T}$ where
		\begin{align}\label{eq:OptimalStrat}
			\Thtv^*(t,x,\zv)
			:=
			x
			\bigg[
			&\frac{1}{\gam}\Sig^{-1}\muv
			+ C\, \gv(T-t)\\* 
			&{}+ \left(\frac{1}{\gam}\Sig^{-1} \etav(t) - C\, H(T-t)\right)\zv
			\bigg],
		\end{align}
		for $t,x,\zv\in[0,T]\times\Rb_+\times\Rb^N$.
	\end{enumerate} \vspace{1ex}
\end{theorem}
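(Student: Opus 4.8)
The plan is to solve the problem by stochastic dynamic programming, following but simplifying the futures-only argument behind Theorem \ref{thm:Futures}: because all $2N$ risk sources are now spanned by the $2N$ tradable assets, the market is complete and no projection onto the futures-tradable subspace (hence none of the auxiliary matrices $\AF$, $\BF$) is needed, so the full $\Sig$ and $C$ from \eqref{eq:EClam} enter directly. Using $\Sigt_\Zv = C^\top\Sigt$ so that $\Cov(\dd\Zv)=C^\top\Sig C\,\dd t$ and $\Cov(\dd X,\dd\Zv)=\Thtv^\top\Sig C\,\dd t$, the HJB equation associated with $V$ on $[0,T)\times\Rb_+\times\Rb^N$ is
\begin{align*}
0 = {}& V_t + rxV_x + (\mv - K(t)\zv)^\top\nabla_\zv V + \tfrac12\tr\!\big(C^\top\Sig C\,D^2_{\zv\zv}V\big)\\
& + \sup_{\Thtv}\Big\{\Thtv^\top(\muv+\etav(t)\zv)\,V_x + \tfrac12\Thtv^\top\Sig\,\Thtv\,V_{xx} + \Thtv^\top\Sig C\,\nabla_\zv V_x\Big\},
\end{align*}
with terminal condition $V(T,x,\zv)=x^{1-\gam}/(1-\gam)$. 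Since $\Sig\succ0$ and the ansatz \eqref{eq:VF-sol} gives $V_{xx}<0$, the inner problem is a strictly concave quadratic in $\Thtv$, so the first-order condition yields the candidate
$$\Thtv^* = -\tfrac{1}{V_{xx}}\Sig^{-1}\big[(\muv+\etav(t)\zv)\,V_x + \Sig C\,\nabla_\zv V_x\big],$$
and inserting the derivatives of \eqref{eq:VF-sol} (using $xV_x=(1-\gam)V$ and $\nabla_\zv V=\gam V(\gv-H\zv)$) collapses this to exactly \eqref{eq:OptimalStrat}.

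Next I would substitute the separable ansatz \eqref{eq:VF-sol} into the optimized HJB. Writing $\psi(\tau,\zv)=f(\tau)+\zv^\top\gv(\tau)-\tfrac12\zv^\top H(\tau)\zv$ with $\tau=T-t$ and $D^2_{\zv\zv}V=\gam V\big[\gam(\gv-H\zv)(\gv-H\zv)^\top-H\big]$, dividing through by $\gam V$ turns the HJB into a polynomial identity of degree two in $\zv$; here the elementary identity $\etav(t)^\top C = C^\top\etav(t) = -K(t)$ is what converts the $\Zv$-drift coefficient $K(t)$ from \eqref{eq:Z-SDE} into the $\etav,C$ form appearing in the theorem. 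Matching the quadratic, linear, and constant coefficients in $\zv$ against $-\psi_\tau=-f'-\zv^\top\gv'+\tfrac12\zv^\top H'\zv$ produces precisely the Riccati equation \eqref{eq:H-ODE}, the linear system \eqref{eq:g-ODE}, and the quadrature \eqref{eq:f-integral}, while the condition $\psi(0,\cdot)\equiv0$ forces $H(0)=\gv(0)=0_{N}$ and $f(0)=0$. This is lengthy but entirely mechanical.

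For part (i) I would argue global solvability and positive-definiteness of the symmetric Riccati solution directly from \eqref{eq:H-ODE}. Symmetry and local existence/uniqueness are immediate since the coefficients are continuous on $[0,T]$ (the bound $T<T_i$ keeps $\etav(T-\tau)$ finite). Invariance of the PSD cone follows from a tangency argument: if $v^\top H(\tau_0)v=0$ with $H(\tau_0)\succeq0$ then $H(\tau_0)v=0$, and evaluating \eqref{eq:H-ODE} gives $\tfrac{\dd}{\dd\tau}\big(v^\top Hv\big)\big|_{\tau_0}=\tfrac{\gam-1}{\gam^2}\,v^\top\etav^\top\Sig^{-1}\etav\,v\ge0$, so $v^\top H v$ cannot turn negative; moreover $\etav(t)$ has full column rank (its columns have disjoint support and $\eta_{i,F}<\eta_{i,S}$ forbids a null column), hence the source $\tfrac{\gam-1}{\gam^2}\etav^\top\Sig^{-1}\etav\succ0$, which upgrades the tangency argument to strict positivity $H(\tau)\succ0$ for all $\tau\in(0,T]$. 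To rule out finite-time blow-up I would compare $H$ with the solution $\bar H$ of the Lyapunov equation obtained by deleting the damping term $-HC^\top\Sig C H\preceq0$: $\bar H$ is a supersolution, so $0\preceq H\preceq\bar H$, and since $\bar H$ is bounded on $[0,T]$, $H$ is bounded and thus extends to all of $[0,T]$. With $H$ in hand, \eqref{eq:g-ODE} is a linear ODE with continuous coefficients (uniquely solvable on $[0,T]$) and \eqref{eq:f-integral} is a plain integral, giving parts (ii)–(iii) at the PDE level. (The scalar analogue of the Riccati analysis is carried out in \cite{AngoshtariLeung2019}.)

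Finally I would close with a verification theorem. Applying It\^o to $V(t,X_t,\Zv_t)$ along an arbitrary $\Thtv\in\Ac$, the HJB inequality makes the drift nonpositive, so the process is a local supermartingale; a localization combined with moment control then upgrades it to a true supermartingale, yielding $\Eb_{t,x,\zv}\big(X_T^{1-\gam}/(1-\gam)\big)\le V(t,x,\zv)$. The required estimates come from Remark \ref{rem:Z-Gaussian} (conditionally $\Zv$ is Gaussian with finite moments on $[0,T]$) together with $-\tfrac12\zv^\top H(\tau)\zv\le0$, which bounds the exponent in \eqref{eq:VF-sol} from above. Taking $\Thtv=\Thtv^*$ makes the drift vanish identically; after checking that $\Thtv^*$ from \eqref{eq:OptimalStrat} is affine in $\zv$ and proportional to $X_t$ so that admissibility conditions (i)–(ii) of $\Ac$ hold and the wealth stays positive, the associated exponential process is a genuine martingale, giving equality and thereby identifying $V$ as the value function and $\Thtv^*$ as optimal. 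I expect the two main obstacles to be the global solvability/positive-definiteness of the Riccati equation and the integrability needed to promote the local supermartingale to a supermartingale (and a martingale under $\Thtv^*$); both rest on the Gaussian tails of $\Zv$ and the sign restriction $\gam>1$, which is precisely why the more delicate $\gam\in(0,1)$ (nirvana) regime is excluded.
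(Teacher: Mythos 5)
Your derivation of the HJB equation, the first-order condition, and the coefficient-matching that produces \eqref{eq:H-ODE}, \eqref{eq:g-ODE}, and \eqref{eq:f-integral} is exactly the route the paper takes (it reuses the ansatz \eqref{eq:ansatz} from Theorem \ref{thm:Futures} with $\AF,\BF$ replaced by $C^\top\Sig C$ and $C$, precisely because the market is now complete, as you say). Where you genuinely diverge is part (i): the paper proves existence, boundedness, and positive definiteness of $H$ by comparing \eqref{eq:H-ODE} against the homogeneous Riccati equation with trivial solution, invoking the classical comparison theorem of Reid (Lemma \ref{lem:comp}); you instead give a self-contained argument — tangency/invariance of the PSD cone, strict positivity from $\tfrac{\gam-1}{\gam^2}\etav^\top\Sig^{-1}\etav\succ0$ (your full-column-rank observation for $\etav$, resting on $\eta_{i,F}<\eta_{i,S}$, is correct and is implicit in the paper), and domination by a linear Lyapunov equation to exclude blow-up. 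This is essentially a proof of the comparison lemma in the special case needed, so it buys self-containedness at the cost of length; note that since $\etav^\top C=-K\preceq0$ one even gets $\tfrac{\dd}{\dd\tau}\tr H\le\tr\big(\tfrac{\gam-1}{\gam^2}\etav^\top\Sig^{-1}\etav\big)$ directly, which shortens your boundedness step. Two small cautions: (a) the matching of the quadratic coefficient only pins down the symmetric part of the linear term, so you should symmetrize $-\tfrac{2}{\gam}\etav^\top C H$ (i.e.\ work with $\tfrac1\gam(KH+HK)$) before running the cone-invariance argument — the paper handles this by positing $H$ symmetric a priori; (b) in the verification step, $-\tfrac12\zv^\top H\zv\le0$ alone does not bound the exponent because of the linear term $\zv^\top\gv$; the paper completes the square using the Cholesky factor of $H(T-t)$ to get a bound uniform in $\zv$ and then cites a standard verification theorem, which is cleaner than relying on Gaussian moment estimates for $\Zv$. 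Neither point is a genuine gap, just steps you should make explicit.
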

\begin{proof}
	See Appendix \ref{app:VF}.\vspace{1ex}
\end{proof}

Figure \ref{fig:optimFS_CE} is the counterpart of Figure \ref{fig:optimF_CE} and provide information about the CE values when both the futures and the underlying assets are traded. The plots show the same patterns that we saw when trading only futures, namely, higher CE values for extreme values of $\Zv$, as well as the increase and then decrease pattern in the maximum CE value as we approach the end of the trading horizon. It is, however, evident that one loses significant value when the underlying assets are not traded. This is evident in the bottom plot of Figure \ref{fig:optimF_CE}, which includes the CE range when trading only the futures (the dotted black lines) from the bottom plot of Figure \ref{fig:optimF_CE}. For example, by comparing the top middle plots in Figures \ref{fig:optimF_CE} and \ref{fig:optimFS_CE}, we found that at $t=0.15$, the maximum CE value drops from \$33.76 when the underlying assets are traded, to \$5.49 when they are not. Similarly, the minimum CE value drop from \$3.37 to \$1.79.

Figure \ref{fig:optimFS_theta} shows the optimal positions in the futures (the top two rows of plots) and the underlying assets (the bottom two rows of plots) for three time instances: $t=0.05$ (left column), $t=0.15$ (middle column), and $t=0.25$ (right column). As in the case of trading only futures (i.e. compare with Figure \ref{fig:optimF_theta}), the value of the log-basis of a pair of futures and underlying has an effect on the optimal positions of the other pair. Furthermore, the extreme positions (i.e. the larger positions near the edge of the oval confidence region) become larger as we approach the end of the trading horizon. Finally, note that the optimal positions for each pair of futures and underlying are generally short-long positions that are common for convergence trading strategies. For example, whenever we short $F_1$ (i.e. $\tht_1<0$), we long $S_1$ (i.e. $\pi_1>0$) and vice versa. The situation is the same for $F_2$ and $S_2$.

\begin{figure}[p]
\centerline{
% \fbox{
\adjustbox{trim={0.07\width} {0.1\height} {0.06\width} {0.1\height},clip}
{\includegraphics[scale=0.3, page=1]{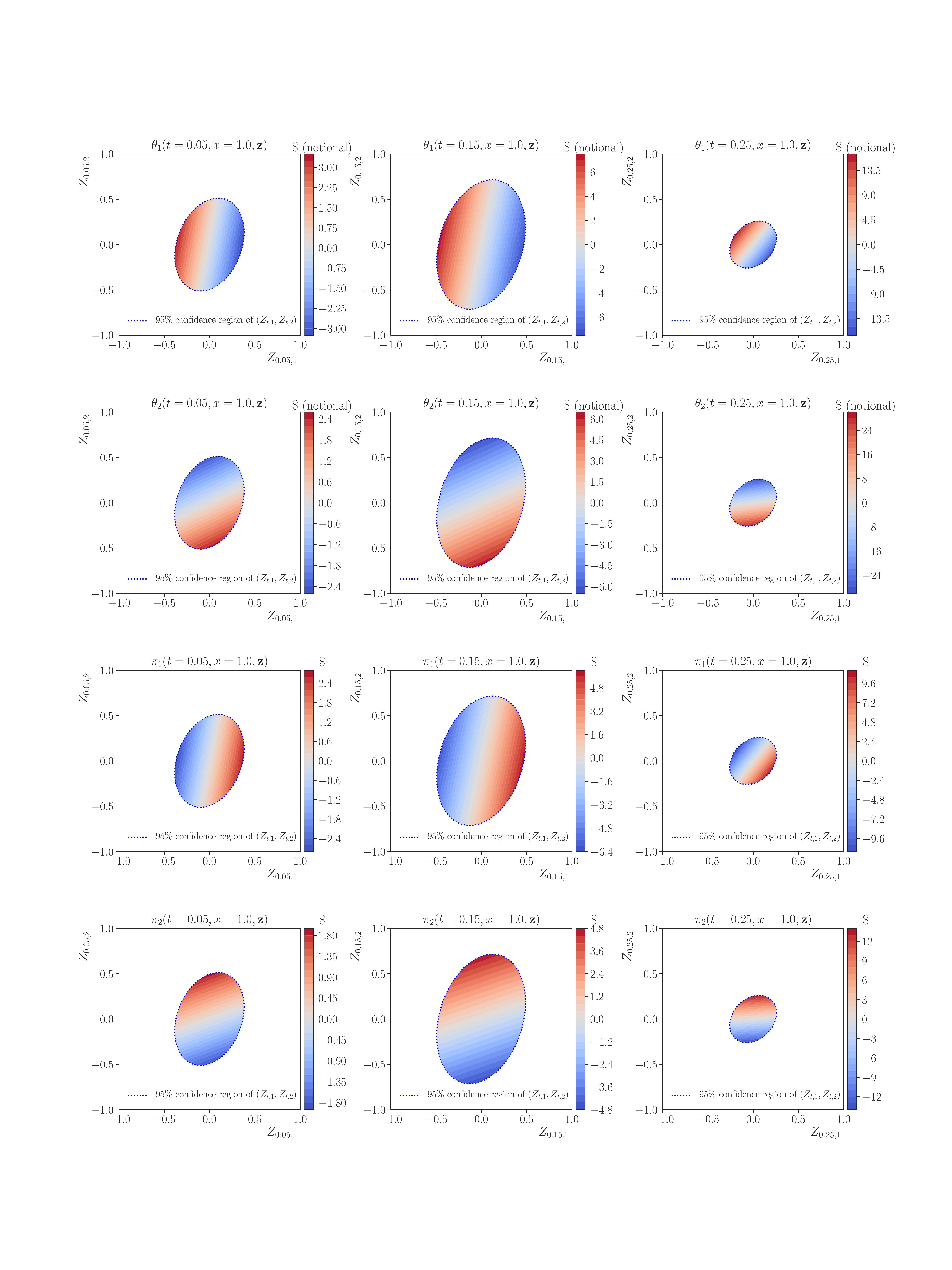}}
% }
}
\caption{Optimal positions in the futures (the top two rows) and the underlying assets (the bottom two rows) for three time instances: $t=0.05$ (left column), $t=0.15$ (middle column), and $t=0.25$ (right column). In each plot, the optimal positions are plotted for fixed value of wealth $x=1$ and different values of the log-bases $(Z_{t,1}, Z_{t,2})$ in their 95\% confidence region.
\vspace{1em}
\label{fig:optimFS_theta}}
\end{figure}

\begin{figure}[t]
\centerline{
% \fbox{
\adjustbox{trim={0.0\width} {0.0\height} {0.0\width} {0.0\height},clip}
{\includegraphics[scale=0.3, page=1]{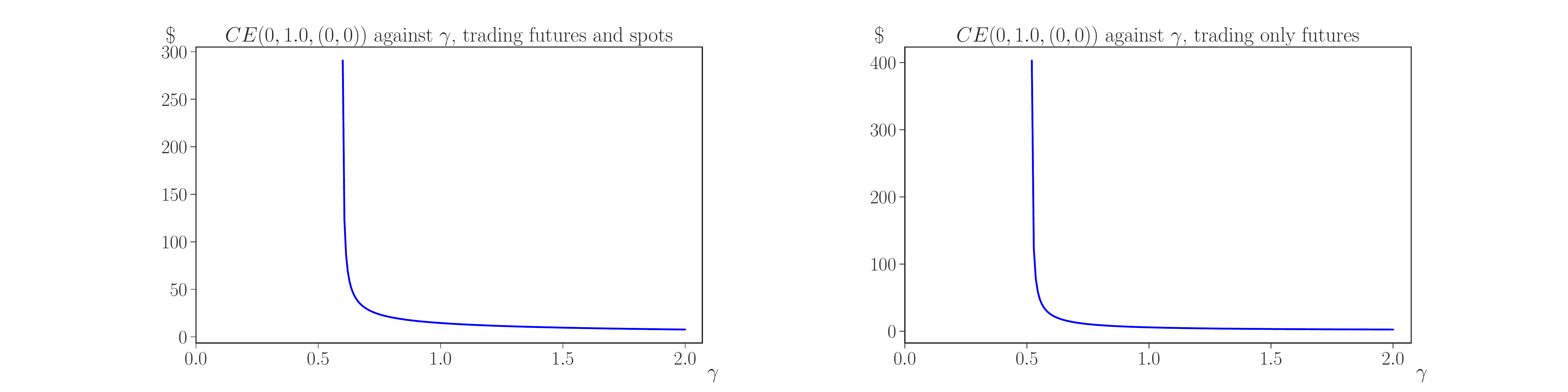}}
% }
}
\caption{Sensitivity of $CE(t=0,x=1.0,\zv=(0,0))$ with respect to the risk aversion parameter $\gam$. Note that we are considering that case $0<\gam<1$ despite the fact that our assumption explicitly excludes this case. The left (resp. right) plot corresponds to trading both futures and the underlying assets (resp. only futures). As can be seen in both cases, the certainty equivalent becomes arbitrary large as $\gam$ approaches certain critical value in the interval $(0,1)$. This indicates existence of nirvana strategies for some values of $\gam\in(0,1)$.
\vspace{1em}
\label{fig:gam_sens}}
\end{figure}

\section{Sensitivity analysis and a simulated example}\label{sec:example}

In this section, we provide a few numerical illustration to investigate how the optimal trading strategies and the CE values are affected by various model parameters. We also include a simulated example to see trading in action.

Let us first consider the effect of risk aversion parameter $\gam$. Since the ``shifted'' utility function $U(x)=(x^{1-\gam}-1)/(1-\gam)$ is decreasing in $\gam$, and shifting the utility function does not change the CE values, one expects that CE values to be decreasing in $\gam$. Figure \ref{fig:gam_sens} confirms this behavior. It shows the CE values at $(t,x,z)=(0,1.0,0)$ as a function of $\gam$, for both cases when the underlying is traded and when it is not. Note that we have included the values of $\gam\in(0,1)$, although we have explicitly excluded these values from our analysis in Sections \ref{sec:Futures} and \ref{sec:SpotFutures}. The plots shows that the CE values approach infinity for certain level of risk-aversion in the range $(0,1)$. As we have mentioned already, this behavior is consistent with the findings of \cite{AngoshtariLeung2019} and indicates the existence of the so-called \emph{nirvana} strategies.

Next, we consider the effect of the mean reversion rates, that is, the parameters $\kap_i:=\eta_{i,S}-\eta_{i,F}$ in \eqref{eq:Bb}. One expects higher mean-reversion to lead to higher CE values, as deviation from equilibrium states are more certain to be corrected. Figure \ref{fig:kappa_sens} confirms this effect. It shows CE values at $(t,x,z)=(0,1.0,0)$ for various multiples of $\kap_1$ and $\kap_2$. In particular, for a given value of $\kap\in(0,1.5)$, it is assumed that $(\eta_{1,F},\eta_{2,F})=(-\kap, -1.5\kap)$ and $\eta_{1,S}=\eta_{2,S}=0$ such that the mean-reversion rates in \eqref{eq:Bb} are given by $\kap_1=\kap$ and $\kap_2=1.5\kap$. Note that CE values are much more affected by the mean-reversion rate when the underlying assets are traded compared to when they are not traded. Note, also, that as $\kap\to0^+$, CE values approach almost 1 (the exact values are around 1.004). This indicates that most of the profitability originates from the mean-reversion (as apposed to the drift in the prices).

\begin{figure}[p]
\centerline{
% \fbox{
\adjustbox{trim={0.0\width} {0.0\height} {0.0\width} {0.0\height},clip}
{\includegraphics[scale=0.3, page=1]{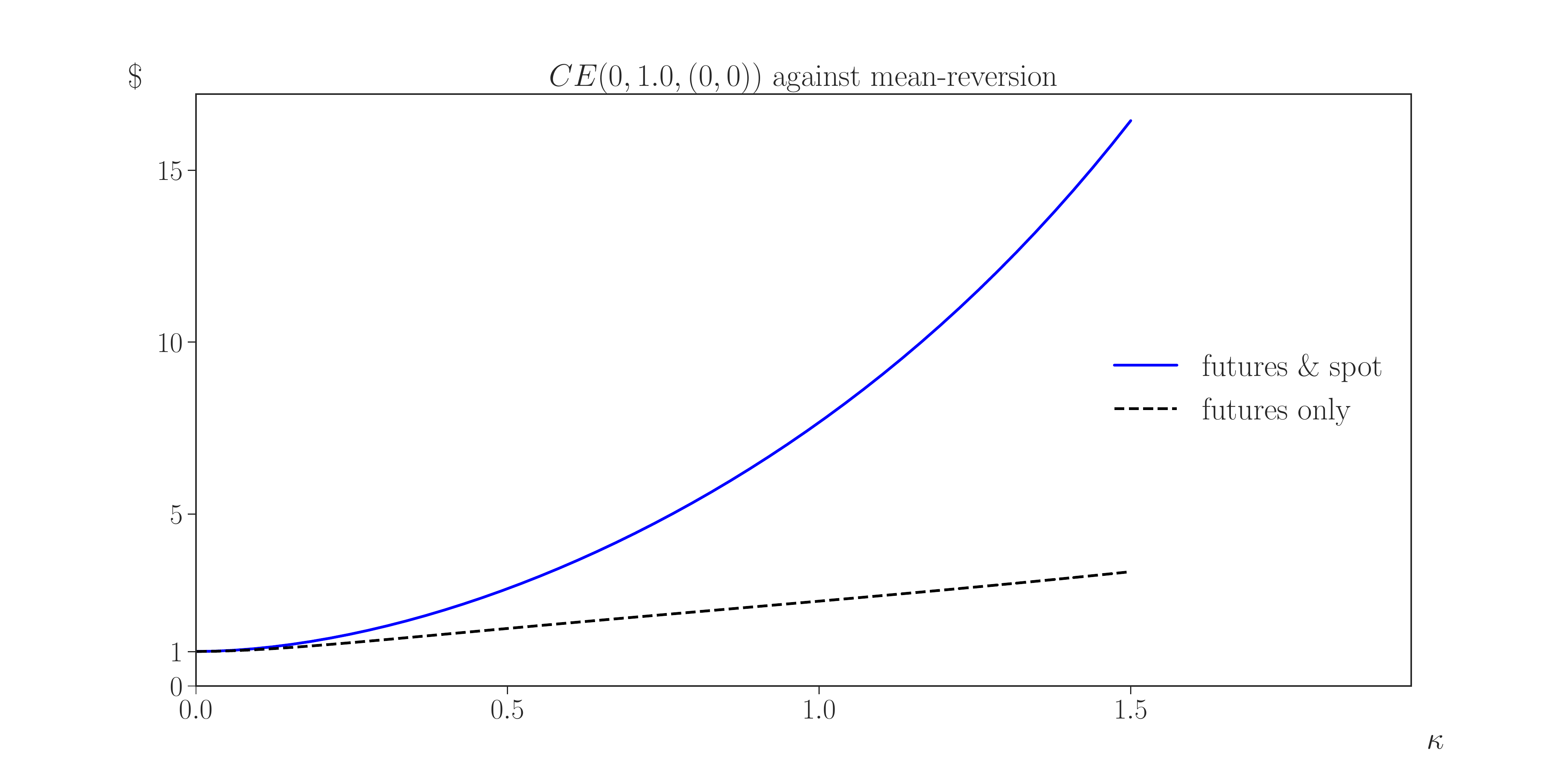}}
% }
}
% \centerline{
% % \fbox{
% \adjustbox{trim={0.0\width} {0.0\height} {0.0\width} {0.0\height},clip}
% {\includegraphics[scale=0.23, page=1]{kappa_sens_near_zero.pdf}}
% % }
% }
\caption{Sensitivity of $CE(t=0,x=1.0,\zv=(0,0))$ with respect to the mean-reversion rate of the log-bases. In particular, for each value of $\kap$, it is assumed that $(\eta_{1,F},\eta_{2,F})=(-\kap, -1.5\kap)$ and $\eta_{1,S}=\eta_{2,S}=0$ such that the mean-reversion rates in \eqref{eq:Bb} are given by $\kap_1=\kap$ and $\kap_2=1.5\kap$. The solid blue (resp. the dashed black) curve corresponds to trading both futures and the underlying assets (resp. only futures). The CE values increase as the mean-reversion rate increases.
% \red{[Fix the mesh issue for the top right plot.]}
\vspace{1em}
\label{fig:kappa_sens}}
\end{figure}
\vspace{5pt}

% \begin{figure}[t]
% \centerline{
% % \fbox{
% \adjustbox{trim={0.0\width} {0.0\height} {0.0\width} {0.0\height},clip}
% {\includegraphics[scale=0.23, page=1]{futures_cor_sens.pdf}}
% % }
% }
% \caption{Sensitivity of $CE(t=0,x=1.0,\zv=(0,0))$ to $\sig_{12}$, i.e. the instantaneous correlation between the futures $\sig_{12}$. The left (resp. right) plot corresponds to trading both futures and the underlying assets (resp. only futures). As can be seen higher positive correlation significantly increase CE values when trading both the futures and the underlying. When trading only futures, however, higher correlation size lower the CE values and the magnitude of change is small.
% \vspace{1em}
% \label{fig:futures_cor_sens}}
% \end{figure}
%
%
% \begin{figure}[t]
% \centerline{
% % \fbox{
% \adjustbox{trim={0.0\width} {0.0\height} {0.0\width} {0.0\height},clip}
% {\includegraphics[scale=0.23, page=1]{spot_cor_sens.pdf}}
% % }
% }
% \caption{Sensitivity of $CE(t=0,x=1.0,\zv=(0,0))$ to $\sig_{34}$, namely, the instantaneous correlation between the underlying assets. The left (resp. right) column of plots corresponds to trading both futures and the underlying assets (resp. only futures). As can be seen, when trading only the futures, higher positive correlation between the underlying assets reduces CE values. When trading both futures and the underlying assets, however, the CE values first decrease and then increases.
% \vspace{1em}
% \label{fig:spot_cor_sens}}
% \end{figure}

\begin{figure}[p]
\centerline{
% \fbox{
\adjustbox{trim={0.0\width} {0.0\height} {0.0\width} {0.0\height},clip}
{\includegraphics[scale=0.3, page=1]{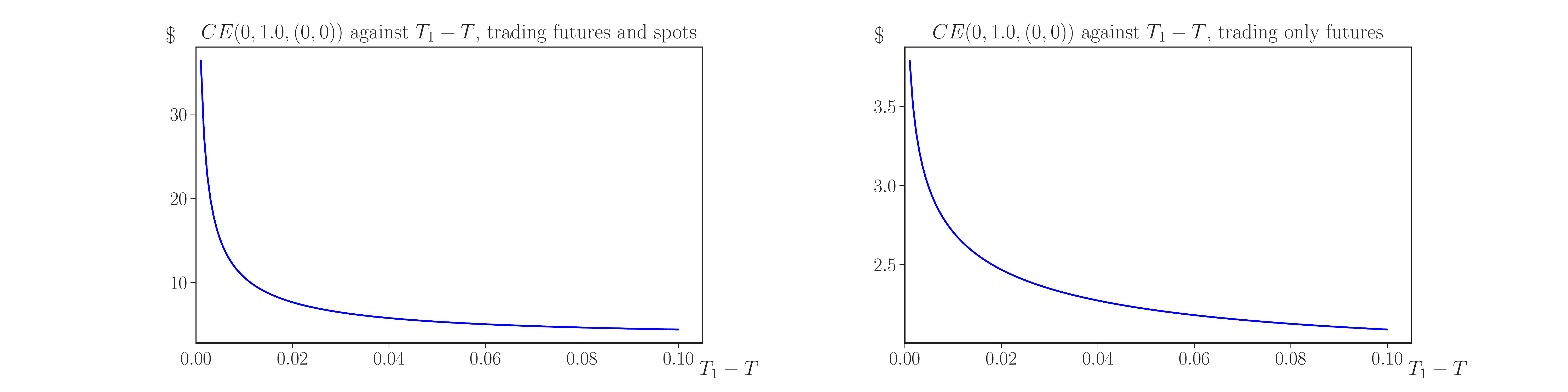}}
% }
}
\caption{Sensitivity of $CE(t=0,x=1.0,\zv=(0,0))$ to $T_1$, i.e. the maturity of the first futures contract. The left (resp. right) column of plots corresponds to trading both futures and the underlying assets (resp. only futures). CE values are decreasing in $T_1$. In particular, in the limit $T_1\to T^+$, the market includes arbitrage opportunities and CE values approach infinity.\hspace{\textwidth}
\emph{Parameters:} $\gam=2.0$. The remaining parameters (other than $T_1$) are as in Figure \ref{fig:sim-FSZ}.
\vspace{1em}
\label{fig:eps_sens}}
\end{figure}

Finally, let us consider the effect of the delivery time of the futures contracts. One expects the delivery dates that are nearer to the end of the trading horizon, to result in higher CE values. This is because lower values of $T_i-T$ mean stronger convergence of futures and spot at $T$. Indeed, as we have mentioned in Remark \ref{rem:arbitrage}, our market model includes arbitrage in the limit $T\to T_i^-$ and CE values should approach infinity at this limit. These observations are confirmed by Figure \ref{fig:eps_sens}. It shows the CE values as a function of $T_1-T$, that is, the time to maturity of the first futures contract at the end of the trading horizon $T$. As one can see, the CE values increases to $+\infty$ as $T\to T_i^-$.

\begin{figure}[p]
\centerline{
% \fbox{
\adjustbox{trim={0.0\width} {0.0\height} {0.0\width} {0.0\height},clip}
{\includegraphics[scale=0.23, page=1]{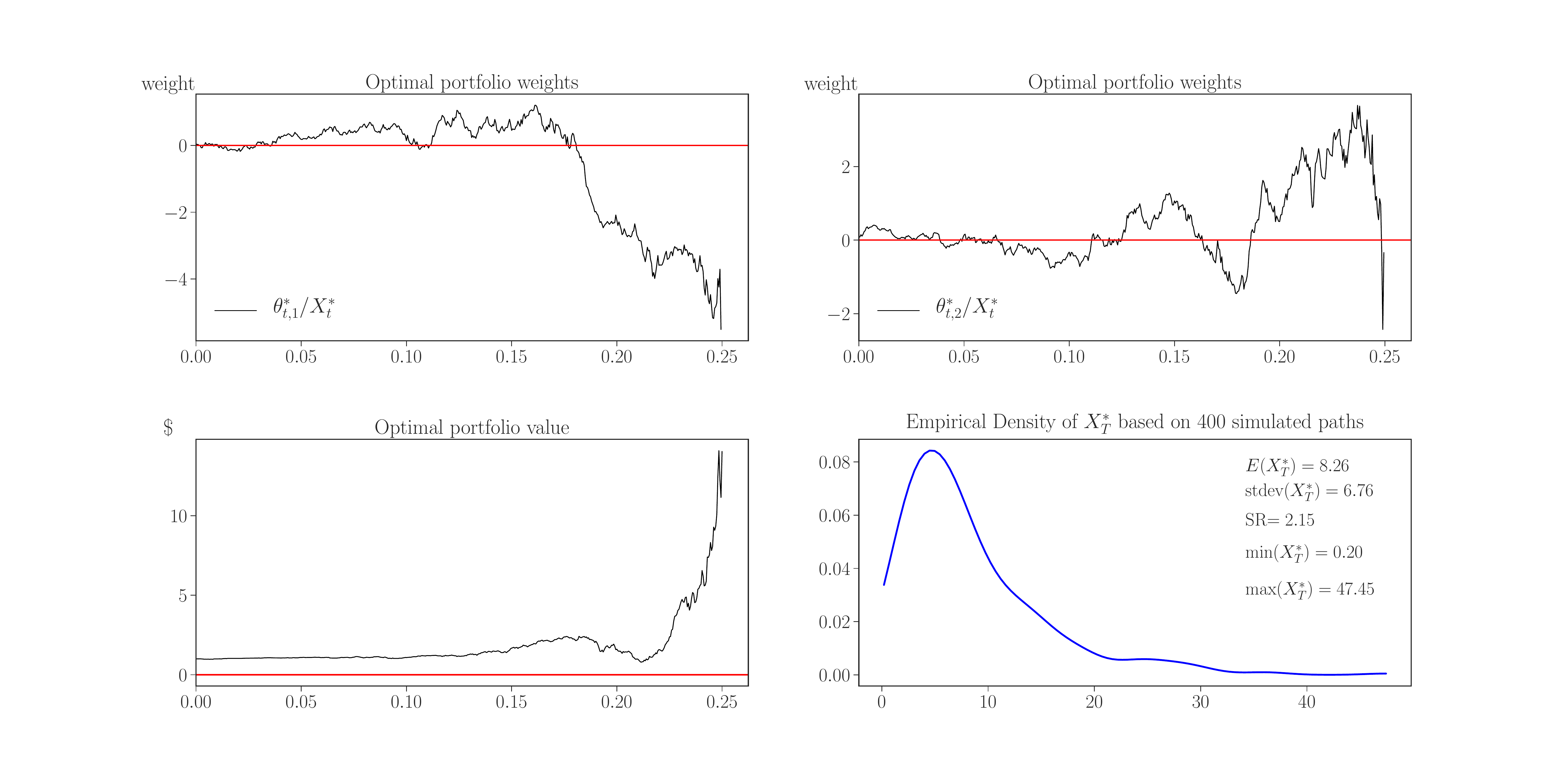}}
% }
}
\caption{\emph{Top:} Positions for the optimal futures only trading strategy assuming the  price paths in the top and middle rows of Figure \ref{fig:sim-FSZ}. \emph{Bottom left:} The corresponding path of the portfolio value. \emph{Bottom right:} The kernel density estimation (KDE) of the optimal terminal wealth $X^*_T$ based on 400 simulated price paths, all of which started with the initial wealth $X^*_0=1$. \vspace{1em}
\label{fig:simTrading-F}}
\vspace{1em}
\end{figure}

\begin{figure}[p]
\centerline{
% \fbox{
\adjustbox{trim={0.0\width} {0.0\height} {0.0\width} {0.0\height},clip}
{\includegraphics[scale=0.23, page=1]{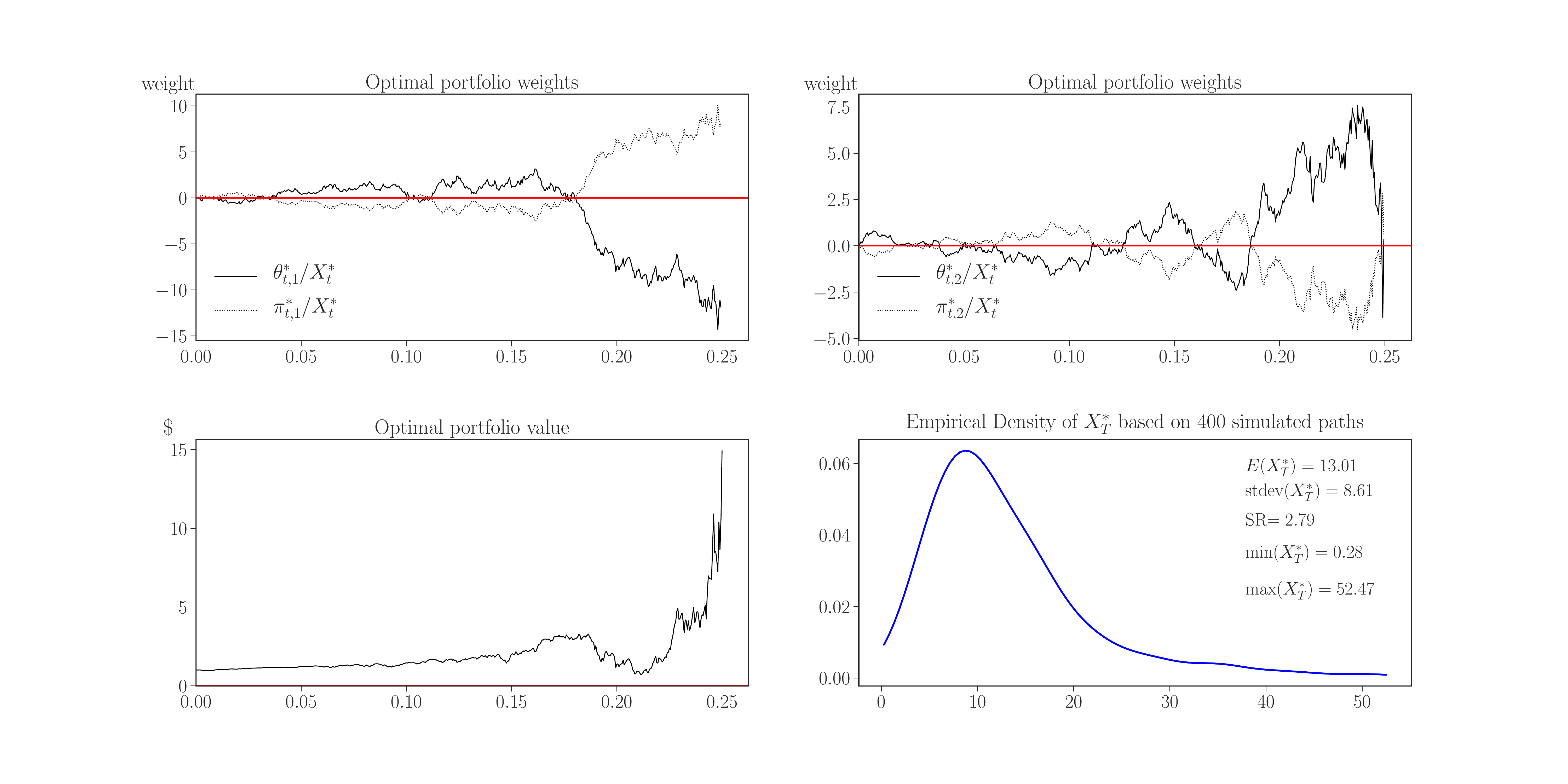}}
% }
}
\caption{\emph{Top:} Positions for the optimal futures and spots trading strategy assuming the  price paths in the top and middle rows of Figure \ref{fig:sim-FSZ}. \emph{Bottom left:} The corresponding path of the portfolio value. \emph{Bottom right:} The KDE of the optimal terminal wealth $X^*_T$ based on 400 simulated price paths, all of which started with the initial wealth $X^*_0=1$. \vspace{1em}
\label{fig:simTrading-FS}}
\vspace{1em}
\end{figure}

We end this section by a simulated example. The market is simulated $400$ times according to the discretization scheme of Corollary \ref{cor:Sim} using market parameters from the caption of Figure \ref{fig:sim-FSZ}. For each simulated market path, we trade once according to the strategy given by Figures \ref{fig:optimF_theta} (i.e. trading  futures only), and another time according to the strategy given by \ref{fig:optimFS_theta} (i.e. trading both the futures and the underlying assets). In each case, the initial portfolio value is set to be \$1, i.e. $x=1$. As a result, 400 paths of the portfolio values are obtained. In Figures \ref{fig:simTrading-F} and \ref{fig:simTrading-FS}, we illustrate  the positions and the portfolio value corresponding to the price paths from  Figure \ref{fig:sim-FSZ}. We have also included the kernel density estimation (KDE) of the density of $X^*_T$, i.e. the optimal portfolio value at $T$, along with a few summary statistics. The density appears to be right skewed for both trading strategies. The Sharpe ratio of the futures-only strategy is 2.15, which is lower than 2.79 achieved by the futures-and-underlying strategy. Finally, both strategies may result in losses. The minimum terminal wealth for the futures-only strategy is 0.2, while the worst terminal wealth for the futures-and-underlying strategy is 0.28.

\section{Concluding Remarks}\label{sec:conclude}

We have introduced a new stochastic model for the joint dynamics of bases among different futures with different spot assets.  It    captures  not only the dependency structure among futures and spot assets but also the path behaviors of the associated bases. With {continuous-time} rebalancing throughout the trading horizon,  the optimal trading problem admits a solution semi-explicitly in terms of the solution of a matrix Riccati differential equation. This allows for instant computation and analysis of the optimal trading strategies. Future research directions include pricing and hedging commodity derivatives under the stochastic basis model  and studying dynamic futures trading under other risk preferences. Alternative approaches to trading futures, such as rolling and timing strategies,\footnote{See \cite{LeungXin2016} and \cite{LeungLiLiZheng2015} for   discussions of such strategies involving a single futures contract.} can also be explored using this  model.

\clearpage

%-------------------------------------------------------
%
%					Bibliography
%
%-------------------------------------------------------
\pagebreak
\bibliographystyle{chicago}
\bibliography{references}

%-----------------------------------------------------------------------------------
%
%       SECTION: 		Proof of Theorem 1
%
%-----------------------------------------------------------------------------------
\appendix

\section{Proof of Theorem \ref{thm:Futures}}\label{app:Futures}

The proof relies on the following well-known comparison result for Riccati differential equations, which we include for readers' convenience. Let $A\ge0$ (resp. $A>0$) denote that A is positive semi-definite (resp. positive definite) and $A\ge B$ (resp. $A>B$) denote that $A-B\ge0$ (resp. $A-B>0$). 

\begin{lemma}[\cite{Reid1972}, Theorem 4.3, p.~122]\label{lem:comp}
	Let $A(t)$, $B(t)$, $C(t)$, and $\Ct(t)$ be continuous $N\times N$ matrix functions on an interval $[a,b]\in\Rb$, and $H_0$ and $\Ht_0$ be two symmetric $N\times N$ matrices. Furthermore, assume that for all $t\in[a,b]$, $B(t)\ge0$ and $C(t)$ and $\Ct(t)$ are symmetric such that $C(t)\ge\Ct(t)$. Consider the Riccati matrix differential equations
	\begin{align}\label{eq:Riccati1}
	\begin{cases}
		H'(t) + H(t)\,B(t)\, H(t) + H(t)A(t) + A(t)^\top H(t) - C(t) =0;
		\quad a\le t \le b,\\
		H(a) = H_0, 
	\end{cases}
	\end{align}
	and
	\begin{align}\label{eq:Riccati2}
	\begin{cases}
		\Ht'(t) + \Ht(t)\,B(t)\, \Ht(t) + \Ht(t)A(t) + A(t)^\top \Ht(t) - \Ct(t) =0;
		\quad a\le t \le b,\\
		H(a) = \Ht_0.
	\end{cases}
	\end{align}
	If $H_0>\Ht_0$ (resp. $H_0\ge\Ht_0$) and \eqref{eq:Riccati2} has a symmetric solution $\Ht(t)$, then \eqref{eq:Riccati1} also has a symmetric solution $H(t)$ such that $H(t)>\Ht(t)$ (resp. $H(t)\ge\Ht(t)$) for all $t\in[a,b]$. 
\end{lemma}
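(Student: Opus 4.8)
The plan is to reduce the comparison to a single positivity statement for a linear Hamiltonian system, in the spirit of the classical Riccati linearization. Throughout, write $R(t):=C(t)-\Ct(t)\ge0$ and recall $B(t)\ge0$. First I would derive the equation satisfied by the difference $D:=H-\Ht$ on any common interval of existence. Subtracting \eqref{eq:Riccati2} from \eqref{eq:Riccati1} and expanding the quadratic term as $HBH-\Ht B\Ht=\Ht BD+DB\Ht+DBD$ (which is manifestly symmetric), one finds that $D$ itself solves a Riccati equation
\[
	D'=-\Lambda^\top D-D\Lambda-DBD+R,\qquad D(a)=H_0-\Ht_0,
\]
where $\Lambda:=B\Ht+A$, so that $\Lambda^\top=\Ht B+A^\top$. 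The value of this step is that the coefficients $\Lambda,B,R$ are continuous on $[a,b]$ (since $\Ht$ is), with $R\ge0$ and $B\ge0$.

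Next I would linearize via the Hamiltonian substitution $D=PQ^{-1}$, where $(P,Q)$ solves the linear system
\[
	\frac{d}{dt}\begin{pmatrix}P\\Q\end{pmatrix}=\begin{pmatrix}-\Lambda^\top & R\\ B & \Lambda\end{pmatrix}\begin{pmatrix}P\\Q\end{pmatrix},\qquad P(a)=H_0-\Ht_0,\quad Q(a)=I.
\]
A direct check shows $D=PQ^{-1}$ solves the Riccati equation for $D$. Because the system is linear, $P$ and $Q$ exist and are $C^1$ on all of $[a,b]$, with no finite escape. The key computation is that the ``Wronskian'' $Q^\top P$ is nondecreasing in the semidefinite order: differentiating and using both rows of the system, the $\Lambda$-terms cancel and one obtains $\frac{d}{dt}(Q^\top P)=P^\top B\,P+Q^\top R\,Q\ge0$. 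Since the integrand is symmetric and $Q(a)^\top P(a)=H_0-\Ht_0$ is symmetric, integrating gives $Q(t)^\top P(t)\ge H_0-\Ht_0$ and $Q^\top P$ symmetric for all $t\in[a,b]$.

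The comparison and the existence assertion then come out together. Wherever $Q(t)$ is invertible, the identity $D=PQ^{-1}=Q^{-\top}(Q^\top P)Q^{-1}$ exhibits $D(t)$ as a congruence of the symmetric matrix $Q^\top P\ge0$, so $D(t)\ge0$, i.e. $H\ge\Ht$; moreover $D$ is symmetric, so $H=\Ht+D$ is a symmetric solution. In the strict case $H_0>\Ht_0$ one has $Q^\top P\ge H_0-\Ht_0>0$ everywhere, which forces $Q(t)$ invertible on all of $[a,b]$: if $Qc=0$ then $c^\top Q^\top P\,c=0$, contradicting positive definiteness. Hence $H:=\Ht+PQ^{-1}$ is defined on the whole interval, solves \eqref{eq:Riccati1}, and satisfies $H>\Ht$.

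I expect the one genuinely delicate point to be the \emph{non-strict} case $H_0\ge\Ht_0$, where the above does not by itself rule out $Q$ becoming singular (equivalently, finite-time blow-up of $H$). I would resolve this by perturbation: for $\eps>0$ apply the strict case to the initial datum $H_0+\eps I>\Ht_0$, obtaining solutions $H^\eps>\Ht$ on all of $[a,b]$. Applying the strict comparison again, now between two solutions of \eqref{eq:Riccati1} with the same source $C$ and initial data $H_0+\eps_1 I>H_0+\eps_2 I$, shows the family $\{H^\eps\}$ is decreasing in $\eps$ and bounded below by $\Ht$. Hence $H:=\lim_{\eps\downarrow0}H^\eps$ exists pointwise with $H\ge\Ht$, and a standard uniform-bound/equicontinuity argument (passing to the limit in the integral form of \eqref{eq:Riccati1}) shows this limit is a symmetric solution on $[a,b]$. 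This simultaneously delivers existence of $H$ and the non-strict comparison, completing the proof.
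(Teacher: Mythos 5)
The paper never proves this lemma: it is imported verbatim from Reid (1972, Theorem 4.3, p.~122) and stated only for the reader's convenience, so there is no internal proof to compare yours against. Your argument is nonetheless essentially correct, and it is the classical route to such comparison theorems (close in spirit to the arguments in Reid's monograph): derive the Riccati equation for the difference $D=H-\Ht$, linearize it through the Hamiltonian system for $(P,Q)$, observe that $\frac{d}{dt}(Q^\top P)=P^\top B\,P+Q^\top R\,Q\ge 0$, and read off both existence (invertibility of $Q$) and the ordering from the congruence $D=Q^{-\top}\left(Q^\top P\right)Q^{-1}$; the strict case is complete as written. Three small points should be made explicit. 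First, your identities $\Lambda^\top=\Ht B+A^\top$ and the symmetry of $P^\top B\,P$ silently use symmetry of $B(t)$, which is implicit in the paper's convention that $B(t)\ge0$ means symmetric positive semidefinite. Second, in the non-strict case, when you compare $H^{\eps_1}$ with $H^{\eps_2}$ you are identifying the solution produced by the strict case (with data $H_0+\eps_1 I$ and reference solution $H^{\eps_2}$) with $H^{\eps_1}$ itself; this requires uniqueness of solutions of the Riccati initial-value problem, which holds because the right-hand side is locally Lipschitz, but it should be stated. Third, the limit $\eps\downarrow0$ needs the uniform two-sided bound $\Ht\le H^\eps\le H^{\eps_0}$, which indeed follows from your monotonicity in $\eps$; with that bound, equicontinuity follows from the equation and passage to the limit in the integral form is routine. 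With these clarifications, your proof is a valid, self-contained substitute for the citation, which is arguably more than the paper itself provides.
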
\vspace{1em}

\noindent{\bf Proof of (i):}			
Consider the matrix Riccati differential equation
\begin{align}
\begin{cases}
	\Ht'(\tau) 
	+ \Ht(\tau)\, \big(\gam\,C^\top\Sig\,C + (1-\gam)A\big)\, \Ht(\tau)\\*
	\hspace{3em}- 2\left(\etav(T-\tau)^\top C + \left(\frac{1}{\gam}-1\right)\etav_F(T-\tau)^\top B\right)\, \Ht(\tau)=0_{N\times N},\\
	H(0)=0_{N\times N},
\end{cases}
\end{align}
which has the trivial solution $\Ht_0\equiv 0_{N\times N}$. Assume, for now, that $\gam\,C^\top\Sig\,C + (1-\gam)A >0$. By Lemma \ref{lem:comp}, it follows that \eqref{eq:H-ODE-F} has a positive semidefinite solution $H(\tau)$ on $[0,T]$. That $H(\tau)$ is positive definite on $(0,T]$ follows from the fact that $H'(0) = \frac{\gam-1}{\gam^2} \etav_F(T-\tau)^\top \Sig_\Fv^{-1}\etav_F(T-\tau)>0$ and Lemma \ref{lem:comp}. The uniqueness of the solution follows from the uniqueness theorem for first order differential equations.
		
	It only remains to show that $\gam\,C^\top\Sig\,C + (1-\gam)A >0$. By \eqref{eq:Cov}, \eqref{eq:A}, and \eqref{eq:EClam}, we have
	\begin{align}
		A &= \Sig_\Fv + \Sig_\Sv - \Sig_{\Fv\Sv} - \Sig_{\Fv\Sv}^\top - \Sig_\Sv + \Sig_{\Fv\Sv}^\top\Sig_\Fv^{-1}\Sig_{\Fv\Sv}\\*
		&= C^\top \Sig C - \left(\Sig_\Sv - \Sig_{\Fv\Sv}^\top\Sig_\Fv^{-1}\Sig_{\Fv\Sv}\right).
	\end{align}
	From \eqref{eq:Cov} and \eqref{eq:Sigt}, we obtain
	\begin{align}
		\Sig_\Sv - \Sig_{\Fv\Sv}^\top \Sig_\Fv^{-1}\Sig_{\Fv\Sv} &=
		\Sigt_\Sv\Sigt_\Sv^\top + \Sigt_{\Fv\Sv}^\top\Sigt_{\Fv\Sv}
		- \Sigt_{\Fv\Sv}^\top \Sigt_\Fv^\top 
		\left(\Sigt_{\Fv}\Sigt_{\Fv}^\top\right)^{-1}
		\Sigt_\Fv\Sigt_{\Fv\Sv}\\*
		&=\Sigt_\Sv\Sigt_\Sv^\top
	\end{align}
	Finally, using the assumption $\gam>1$ and the last two results yield
	\begin{align}
		\gam\,C^\top\Sig\,C + (1-\gam)A &= C^\top\Sig\,C + (\gam-1)\left(\Sig_\Sv - \Sig_{\Fv\Sv}^\top \Sig_\Fv^{-1}\Sig_{\Fv\Sv}\right)\\*
		&= C^\top\Sig\,C + (\gam -1) \Sigt_\Sv\Sigt_\Sv^\top >0,
	\end{align}
	as we set out to prove.
\vspace{1em}

	\noindent{\bf Proof of (ii) and (iii):} As we argue later, $\VF(t,x,\zv)$ is the solution of the Hamilton-Jacobi-Bellman (HJB) equation
	\begin{align}\label{eq:HJB-F}
	\begin{cases}
		\displaystyle
		\vp_t + \sup_{\thtv\in\Rb^N} \Jc_\thtv \vp = 0,\\
		\vp(T,x,\zv) = \frac{x^{1-\gam}}{1-\gam},
	\end{cases}
	\end{align}
	for $(t,x,\zv)\in[0,T]\times\Rb_+\times\Rb^N$, in which the differential operator $\Jc_\thtv$ is given by
	\begin{align}
		\Jc_\thtv \varphi(t, x,\zv) :={}
		&(\mv+C^\top \etav(t)\zv)^\top \varphi_\zv
		+ \frac{1}{2}\tr(C^\top\Sig\,C\,\varphi_{\zv\zv})\\*
		&{}+ \big(r x + \thtv^\top(\muv_\Fv+\etav_F(t)\zv)\big)\,\varphi_x\\*
		&{}+ \frac{1}{2} \thtv^\top\Sig_\Fv\,\thtv \,\varphi_{xx}
		+ \thtv^\top (\Sig_\Fv - \Sig_{\Fv\Sv})\, \varphi_{x\zv},
	\end{align}
	for any $\thtv\in\Rb^N$ and any $\varphi(t,x,\zv):[0,T]\times\Rb_+\times\Rb^N\to\Rb$ that is continuously twice differentiable in $(x,\zv)$. Here, $\tr(A)$ is the trace of the matrix A and we have used the shorthand notations $\varphi_\zv$ and $\varphi_{\zv\zv}$ to denote the gradient vector and the Hessian matrix of $\varphi(t,x,\zv)$ with respect to $\zv$, that is,
	\begin{align}
		\varphi_\zv(t,x,\zv) := \left(\frac{\partial \varphi}{\partial z_1},\dots,
			\frac{\partial \varphi}{\partial z_N}\right),\quad\text{and}\quad
		\varphi_{\zv\zv}(t,x,\zv) := \left[\frac{\partial^2 \varphi}{\partial z_i\partial z_j}\right]_{N\times N}.
	\end{align}

	By assuming that $\vp_{xx}<0$ (which is verified by the form of the solution, i.e. \eqref{eq:ansatz}), the maximizer of the left side of the differential equation in \eqref{eq:HJB-F} is
	% \begin{align}
	% \begin{split}
	% 	\Jc_\thtv \vp(t,x,\zv) ={}
	% 	&(\mv+C^\top E(t)\zv)^\top \vp_\zv
	% 	+ \frac{1}{2}\tr(C^\top\Sig\,C\,\vp_{\zv\zv})
	% 	+ rx\, \vp_x\\*
	% 	&{}-\frac{1}{2}(\alv+E_\Fv(t)\zv)^\top\Sig_\Fv^{-1}(\al+E_\Fv(t)\zv)\frac{\vp_x^2}{\vp_{xx}}\\*
	% 	&{}-\frac{1}{2\vp_{xx}}\vp_{x\zv}^\top \big(\Sig_\Fv + \Sig_{\Sv\Fv}\Sig_\Fv^{-1}\Sig_{\Fv\Sv} - \Sig_{\Sv\Fv}-\Sig_{\Fv\Sv}\big)\,\vp_{x\zv}
	% 	-\frac{\vp_x}{\vp_{xx}}(\alv+E(t)\zv)^\top (I_N - \Sig_\Fv^{-1}\Sig_{\Fv\Sv}) \vp_{x\zv}\\*
	% 	&{}+\frac{1}{2}\vp_{xx}\left\|
	% 								\Sigt_\Fv^\top \thtv
	% 								+ \frac{\vp_x}{\vp_{xx}}\Sigt_\Fv^{-1}(\alv+E_\Fv(t)\zv)
	% 								+ \frac{1}{\vp_{xx}} \Sigt_\Fv^{-1}(\Sig_\Fv-\Sig_{\Fv\Sv})\,\vp_{x\zv}
	% 						   \right\|^2
	% \end{split}
	% \end{align}
	\begin{align}\label{eq:CandidateOptimal-F}
		\thtv^*(t,x,\zv) =
		- \frac{\vp_{x}(t,x,\zv)}{\vp_{xx}(t,x,\zv)} \Sig_\Fv^{-1}(\muv_\Fv+\etav_F(t)\zv)
		- \BF\,\frac{\vp_{x\zv}(t,x,\zv)}{\vp_{xx}(t,x,\zv)}.
	\end{align}
	Substituting $\sup_{\thtv\in\Rb^N} \Jc_\thtv \vp=\Jc_{\thtv^*}\vp$ in \eqref{eq:HJB-F} yields
	\begin{align}\label{eq:HJB2-F}
		\vp_t &{}+ (\mv+C^\top \etav(t)\zv)^\top \vp_\zv
		+ \frac{1}{2}\tr(C^\top\Sig\,C\,\vp_{\zv\zv})
		+ rx\, \vp_x\\*
		&{}-\frac{1}{2}\big(\muv_\Fv+\etav_F(t)\zv\big)^\top\Sig_\Fv^{-1}\big(\muv_\Fv+\etav_F(t)\zv\big)\frac{\vp_x^2}{\vp_{xx}}\\*
		&{}-\frac{1}{2\vp_{xx}}\vp_{x\zv}^\top \AF\,\vp_{x\zv}
		-\frac{\vp_x}{\vp_{xx}}(\muv_\Fv+\etav_F(t)\zv)^\top \BF\, \vp_{x\zv}
		= 0,
	\end{align}
	for $(t,x,\zv)\in[0,T]\times\Rb_+\times\Rb^N$, subject to the terminal condition
	\begin{align}\label{eq:TC-F}
		\vp(T,x,\zv) = \frac{x^{1-\gam}}{1-\gam}.
	\end{align}
	To solve \eqref{eq:HJB2-F}, we consider the ansatz
	\begin{align}\label{eq:ansatz}
		\vp(t,x,\zv) &=
		\frac{x^{1-\gam}}{1-\gam}
		\ee^{\gam\left(f(T-t) + \zv^\top \gv(T-t) - \frac{1}{2} \zv^\top H(T-t)\zv\right)},
		% \\*
		% &=
		% \frac{x^{1-\gam}}{1-\gam}
		% \exp\left(f(t) +
		% \sum_{i=1}^{2} g_i(t)\, z_i + \frac{1}{2}\sum_{i,j=1}^2  h_{ij}(t)\, z_i\,z_j
		% \right),
	\end{align}
	for $(t,x,\zv)\in[0,T]\times\Rb_+\times\Rb^N$, in which $f(t)$, $\gv(t)=\big(g_1(t),\ldots,g_N(t)\big)^\top$, and
	\begin{align}
		H(t) =
		\begin{pmatrix}
			h_{11}(t) & \dots & h_{1N}(t)\\
			\vdots & \ddots & \vdots\\
			h_{N1}(t) & \dots & h_{NN}(t)
		\end{pmatrix} 
	\end{align}
	are unknown functions to be determined. Without loss of generality, we further assume that $H(t)$ is symmetric such that $h_{ij}(t)=h_{ji}(t)$ for $0\le t\le T$. Substituting this ansatz into \eqref{eq:HJB2-F} 	yields
	\begin{align}
		&\frac{\gam}{2}\zv^\top\bigg[
			H' 
			+ H\, \big(\gam\,C^\top\Sig\,C + (1-\gam)A\big)\, H\\*
		&\qquad\qquad{}- 2\left(E^\top C + \left(\frac{1}{\gam}-1\right)\etav_F^\top B\right)\, H
			+ \frac{1-\gam}{\gam^2} \etav_F^\top \Sig_\Fv^{-1}\etav_F
		\bigg]\zv\\*
		&{}+\gam\zv^\top\bigg[
			-\gv'
			+ \left(E^\top C - \gam\,H\,C^\top\Sig\,C - (1-\gam)HA 
			+ \left(\frac{1}{\gam}-1\right)\etav_F^\top B\right)\gv\\
			&\hspace{4em}{}- H\left(\mv + \left(\frac{1}{\gam}-1\right) B^\top\muv_\Fv\right)
			+\frac{1-\gam}{\gam^2}\,\etav_F^\top\Sig_\Fv^{-1}\muv_\Fv
		\bigg]\\*
		&{}- \gam f'
		+(1-\gam)\left(r+\frac{\muv_\Fv^\top\Sig_\Fv^{-1}\muv_\Fv}{2\gam}\right)
		+\frac{\gam}{2}\gv^\top\left((1-\gam) A + \gam\, C^\top \Sig\,C\right)\gv\\
		&{}+ \gam\left(\mv^\top+\left(\frac{1}{\gam}-1\right)\muv_\Fv^\top B\right)\gv
		-\frac{\gam}{2}\tr(C^\top\Sig\,C\, H)
		=0,
	\end{align}
	for all $(t,\zv)\in[0,T]\times\Rb^N$, where we have omitted the $t$ arguments to simplify the terms. Taking the terminal condition \eqref{eq:TC-F} into account, it then follows that $H$, $\gv$, and $f$ must satisfy \eqref{eq:H-ODE-F}, \eqref{eq:g-ODE-F}, and \eqref{eq:f-integral-F}, respectively.
	
	By statement (i) of the theorem, \eqref{eq:H-ODE-F} has a unique solution that is positive definite on $(0,T]$. Using the classical existence theorem of systems of ordinary differential equations, we then deduce that \eqref{eq:g-ODE-F} also has a unique bounded solution on $[0,T]$. Finally, $f$ given by \eqref{eq:f-integral-F} is continuously differentiable since the integrand on the right side is continuous. Thus, $v(t,x,\zv)$ given by \eqref{eq:VF-sol-F} is a solution of the HJB equation \eqref{eq:HJB-F}.

	It only remains to show that the solution of the HJB equation is the value function, that is $\vp(t,x,\zv)=V(t,x,\zv)$ for all $(t,x,\zv)\in[0,T]\times\Rb_+\times\Rb^N$. Note that, for $0\le t <T$, we have
	\begin{align}
		f(T-t) &{}+ \zv^\top \gv(T-t) - \frac{1}{2} \zv^\top H(T-t)\zv\\*
		&= f(T-t) + \frac{1}{2} \gv(T-t)^\top H(T-t)\,\gv(T-t)
		-\frac{1}{2}\left\|\Ht^\top \zv - \Ht^{-1}\gv(T-t)\right\|^2\\*
		&\le f(T-t) + \frac{1}{2} \gv(T-t)^\top H(T-t)\,\gv(T-t),
	\end{align}
	in which $\Ht(T-t)$ is the Cholesky factor of the positive definite matrix $H(T-t)$. It then follows that $\vp(t,x,\zv)$ in \eqref{eq:ansatz} is bounded in $\zv$ and has polynomial growth in $x$. A standard verification result such as Theorem 3.8.1 on page 135 of \cite{FlemingSoner2006} then yields that $\vp(t,x,\zv)=V(t,x,\zv)$ for all $(t,x,\zv)\in[0,T]\times\Rb_+\times\Rb^N$.
	
	The verification result also states that the optimal control in feedback form is $\thtv^*$ given by \eqref{eq:CandidateOptimal-F}. Using \eqref{eq:ansatz}, one obtains $\thtv^*(t,x,\zv)$ in terms $H$ and $\gv$ as in \eqref{eq:OptimalStrat-F}.

%-----------------------------------------------------------------------------------
%
%       SECTION: Proof of Theorem 2
%
%-----------------------------------------------------------------------------------
\section{Proof of Theorem \ref{thm:VF}}\label{app:VF}

	The proof is similar to the proof of Theorem \ref{thm:Futures} and, thus, is presented in less detail.\vspace{1em}
	
	\noindent{\bf (i):} Similar to the proof of statement (i) of Theorem \ref{thm:Futures}, the proof here involves comparing \eqref{eq:H-ODE} with the homogenous equation
		\begin{align}
		\begin{cases}
			\Ht_0'(\tau) +
			\Ht_0(\tau) C^\top \Sig\, C\, \Ht_0(\tau)
			- \frac{2}{\gam}\,\etav(T-\tau)^\top C\, \Ht_0(\tau)=0;
			\quad0\le \tau\le T,\\
			\Ht_0(0)=0_{N\times N},
		\end{cases}
		\end{align}
		using Lemma \ref{lem:comp}.
	\vspace{1em}

	\noindent{\bf (ii) and (iii):} As we later verify, $V(t,x,\zv)$ solves the HJB equation
	\begin{align}\label{eq:HJB}
	\begin{cases}
		\displaystyle
		\vp_t + \sup_{\Thtv\in\Rb^{2N}} \Lc_\Thtv \vp = 0,\\
		\vp(T,x,\zv) = \frac{x^{1-\gam}}{1-\gam},
	\end{cases}
	\end{align}
	for $(t,x,\zv)\in[0,T]\times\Rb_+\times\Rb^N$, in which the differential operator $\Lc_\Thtv$ is given by
	\begin{align}
		\Lc_\Thtv \varphi(t, x,\zv) :={}
		&(\mv+C^\top \etav(t)\zv)^\top \varphi_\zv
		+ \frac{1}{2}\tr(C^\top\Sig\,C\,\varphi_{\zv\zv})\\*
		&{}+ \big(r x + \Thtv^\top(\muv+\etav(t)\zv)\big)\varphi_x
		+ \frac{1}{2} \Thtv^\top\Sig\,\Thtv\,\varphi_{xx}
		+ \Thtv^\top \Sig\, C\, \varphi_{x\zv},
	\end{align}
	for any $\Thtv\in\Rb^{2N}$ and any $\varphi(t,x,\zv):[0,T]\times\Rb_+\times\Rb^N\to\Rb$ that is continuously twice differentiable in $(x,\zv)$.
	Assuming that $\vp_{xx}<0$, which is readily verified by the form of the solution in \eqref{eq:VF-sol}, we obtain that the maximizer $\Thtv^*$ in \eqref{eq:HJB} is given by 
	\begin{align}\label{eq:CandidateOptimal}
		\Thtv^*
		=\begin{pmatrix}
			\thtv^*(t,x,\zv)\\
			\piv^*(t,x,\zv)
		\end{pmatrix}
		=
		- \frac{\vp_{x}(t,x,\zv)}{\vp_{xx}(t,x,\zv)} \Sig^{-1} (\muv+\etav(t)\zv)
		- C\,\frac{\vp_{x\zv}(t,x,\zv)}{\vp_{xx}(t,x,\zv)}.
	\end{align}
	Substituting $\Thtv^*$ into \eqref{eq:HJB} yields
	\begin{align}\label{eq:HJB2}
		\vp_t &{}+ (\mv+C^\top \etav(t)\zv)^\top \vp_\zv
		+ \frac{1}{2}\tr(C^\top\Sig\,C\,\vp_{\zv\zv})
		+ rx\, \vp_x\\*
		&{}-\frac{1}{2}(\muv+\etav(t)\zv)^\top\Sig^{-1}(\muv+\etav(t)\zv)\frac{\vp_x^2}{\vp_{xx}}\\*
		&{}-\frac{1}{2\vp_{xx}}\vp_{x\zv}^\top C^\top \Sig\,C\,\vp_{x\zv}
		-\frac{\vp_x}{\vp_{xx}}(\muv+\etav(t)\zv)^\top C\, \vp_{x\zv}
		= 0,
	\end{align}
	for $(t,x,\zv)\in[0,T]\times\Rb_+\times\Rb^N$, subject to the terminal condition
	\begin{align}\label{eq:TC}
		\vp(T,x,\zv) = \frac{x^{1-\gam}}{1-\gam}.
	\end{align}
	This partial differential equation is similar to \eqref{eq:HJB2-F} and can be solved using the same ansatz.
	% \begin{align}
% 		\vp(t,x,\zv) &=
% 		\frac{x^{1-\gam}}{1-\gam}
% 		\ee^{\gam\left(f(T-t) + \zv^\top \gv(T-t) - \frac{1}{2} \zv^\top H(T-t)\zv\right)},
% 		% \\*
% 		% &=
% 		% \frac{x^{1-\gam}}{1-\gam}
% 		% \exp\left(f(t) +
% 		% \sum_{i=1}^{2} g_i(t)\, z_i + \frac{1}{2}\sum_{i,j=1}^2  h_{ij}(t)\, z_i\,z_j
% 		% \right),
% 	\end{align}
% 	for $(t,x,\zv)\in[0,T]\times\Rb_+\times\Rb^N$, in which $f(t)$, $\gv(t)=\big(g_1(t),\ldots,g_N(t)\big)^\top$, and
% 	\begin{align}
% 		H(t) =
% 		\begin{pmatrix}
% 			h_{11}(t) & \dots & h_{1N}(t)\\
% 			\vdots & \ddots & \vdots\\
% 			h_{N1}(t) & \dots & h_{NN}(t)
% 		\end{pmatrix}
% 	\end{align}
% 	are unknown functions to be determined. Without loss of generality, we further assume that $H(t)$ is symmetric such that $h_{ij}(t)=h_{ji}(t)$ for $0\le t\le T$. Substituting this ansatz into \eqref{eq:HJB2} 	yields
Indeed, applying \eqref{eq:ansatz} yields that $f(t)$, $\gv(t)$, and $H(t)$ satisfy
	% and noting that
	% \begin{align}
	% 	\vp_\zv& = \vp (\gv+H\zv),
	% \intertext{and}
	% 	\tr(C^\top\Sig\,C\,\vp_{\zv\zv}) &= \vp\tr\left(C^\top\Sig\,C\,(H+\gv\gv^\top+H\zv\zv^\top H + H \zv \gv^\top + \gv \zv^\top H)\right)\\*
	% 	&=\vp\left[\tr\left(C^\top\Sig\,C\,(H+\gv\gv^\top)\right)
	% 	+\tr\left(C^\top\Sig\,C\,H\zv\zv^\top H\right)
	% 	+\tr\left(C^\top\Sig\,C\,(H \zv \gv^\top + \gv \zv^\top H)\right)
	% 	\right]\\*
	% 	&=\vp\left[
	% 	\tr\left(C^\top\Sig\,C\,H\right)
	% 	+\gv^\top C^\top\Sig\,C\,\gv
	% 	+\zv^\top H\,C^\top\Sig\,C\,H\zv
	% 	+ 2\,\zv^\top H\,C^\top\Sig\,C\,\gv
	% 	\right],
	% \end{align}
	\begin{align}
		&\frac{\gam}{2}\zv^\top\left[
			H' 
			+ H\, C^\top\Sig\,C\, H
			- \frac{2}{\gam} E^\top C\, H
			+ \frac{1-\gam}{\gam^2} E^\top \Sig^{-1}E
		\right]\zv\\*
		&{}+\gam\zv^\top\bigg[
			-\gv'
			+ (\frac{1}{\gam} E^\top-H\,C^\top\Sig)\,C\,\gv\\*
		&\hspace{3.5em}{}- H\left(\mv + \left(\frac{1}{\gam}-1\right) C^\top\muv\right)
			+\frac{1-\gam}{\gam^2}\,E^\top\Sig^{-1}\muv
		\bigg]\\*
		&{}- \gam f'
		+(1-\gam)\left(r+\frac{\muv^\top\Sig^{-1}\muv}{2\gam}\right)
		+\frac{\gam}{2}\gv^\top C^\top \Sig\,C\,\gv\\*
		&{}+ \gam\left(\mv^\top+\left(\frac{1}{\gam}-1\right)\muv^\top C\right)\gv
		-\frac{\gam}{2}\tr(C^\top\Sig\,C\, H)
		=0,
	\end{align}
	for all $(t,\zv)\in[0,T]\times\Rb^N$, in which we have omitted the $t$ arguments to simplify the notation. Taking the terminal condition \eqref{eq:TC} into account, it then follows that $H$, $\gv$, and $f$ must satisfy \eqref{eq:H-ODE}, \eqref{eq:g-ODE}, and \eqref{eq:f-integral}, respectively.
	
	The verification result and the optimal trading strategy are obtained in a similar fashion as in the proof of Theorem \ref{thm:Futures}.
\end{document}